\title[AAMAS-2023 Formatting Instructions]{Cost Sharing under Private Valuation and Connection Control}
\author{Tianyi Zhang$^*$}
\affiliation{
  \institution{ShanghaiTech University}
  \city{}
  \country{China}
  }
\email{zhangty@shanghaitech.edu.cn}
\author{Junyu Zhang$^*$}
\affiliation{
  \institution{ShanghaiTech University}
  \city{}
  \country{China}}
\email{zhangjy22022@shanghaitech.edu.cn}
\author{Sizhe Gu}
\affiliation{
  \institution{ShanghaiTech University}
  \city{}
  \country{China}}
\email{guszh@shanghaitech.edu.cn}
\author{Dengji Zhao}
\affiliation{
  \institution{ShanghaiTech University}
  \city{}
  \country{China}}
\email{zhaodj@shanghaitech.edu.cn}
\begin{abstract}
We consider a cost sharing problem on a weighted undirected graph, where all the nodes want to connect to a special node called source, and they need to share the total cost (weights) of the used edges. Each node except for the source has a private valuation of the connection, and it may block others' connections by strategically cutting its adjacent edges to reduce its cost share, which may increase the total cost. We aim to design mechanisms to prevent the nodes from misreporting their valuations and cutting their adjacent edges. We first show that it is impossible for such a mechanism to further satisfy budget balance (cover the total cost) and efficiency (maximize social welfare). Then, we design two feasible cost sharing mechanisms that incentivize each node to offer all its adjacent edges and truthfully report its valuation, and also satisfy either budget balance or efficiency.
\end{abstract}
\keywords{Cost sharing; Mechanism design; Truthfulness}
\newcommand{\BibTeX}{\rm B\kern-.05em{\sc i\kern-.025em b}\kern-.08em\TeX}
\begin{document}


\pagestyle{fancy}
\fancyhead{}


\maketitle 
\renewcommand{\thefootnote}{*}
\footnotetext{The authors have equal contributions.}

\section{Introduction}
\label{intro}
In the classic cost sharing problem, there are a group of agents at different locations and a source. All the agents want to connect to the source via the connections (edges) between the locations, but each connection has a cost~\cite{claus1973cost,granot1981minimum,lorenzo2009characterization}. The goal is to allocate the total connection cost among the agents. This problem exists in many real-world applications such as cable TV, electricity, and water supply networks~\cite{bergantinos2010minimum,gomez2011merge,trudeau2017set}. It has been well-studied and many solutions have been proposed to achieve different properties~\cite{kar2002axiomatization,bergantinos2007optimistic,trudeau2012new} (we survey them in Section~\ref{anay}).  

However, these solutions do not consider two natural strategic behaviors of the agents. First, to connect to the source, an agent may need to go through some intermediate agents. These agents may block the connection by strategically cutting their adjacent edges if their cost share is reduced by doing so~\cite{zhao22mechanism} (see an example in Section~\ref{anay}), which will potentially increase the total cost of connecting the agents. Second, each agent has a private valuation for connecting to the source (i.e., the maximum cost that it is willing to share). To maximize social welfare (i.e., the difference between the agents' valuations and the total connection cost), the agents need to report their valuations, but they may misreport for their own interest.

To minimize the total connection cost and maximize social welfare, we design cost sharing mechanisms on general networks that can prevent the two strategic behaviors. One difficulty lies in the conflict that the mechanism designer wants to use all the edges to minimize the total connection cost, but agents have the motivation to cut their adjacent edges to reduce their cost share. This essentially reflects the conflict between the system's optimality and the agents' self-interests. Another difficulty lies in the conflict that the mechanism designer wants to use truthful valuations to select agents with maximum social welfare, while agents have the motivation to misreport their valuations to reduce their cost share.

To combat the challenges, we first show that if we further require efficiency (the set of selected agents has the maximal social welfare) and budget balance (the sum of all agents' cost share equals the total cost), then it is impossible to prevent the above manipulations. However, we could achieve efficiency and budget balance separately.

Therefore, we propose two mechanisms to prevent new manipulations and to achieve either efficiency or budget balance. The first mechanism selects the agents based on their social welfare inspired by the Vickrey-Clarke-Groves (VCG) mechanism~\cite{vickrey1961counterspeculation,clarke1971multipart,groves1973incentives} and each agent pays the minimum reported valuation that enables it to be selected. The second selects the agents iteratively and the total connection cost in each iteration is shared equally among the agents selected in this iteration.
We also show that these mechanisms satisfy other desirable properties studied in the literature~\cite{bogomolnaia2010sharing,bergantinos2015characterization,gomez2017monotonic,norde2019degree,todo2020split}.

\section{Related Work}
\label{anay}
There is rich literature on the classic cost sharing problem, which did not consider private valuation and connection control. Some studies treated the problem from the perspective of a non-cooperative game. Berganti{\~n}os and Lorenzo~\cite{bergantinos2004non} studied the Nash equilibrium of the problem and further they~\cite{bergantinos2008noncooperative} studied the Nash equilibrium with budget restriction. Tijs and Driessen~\cite{tijs1986game} proposed the cost gap allocation (CGA) method, but it only applies to complete graphs. Bird~\cite{bird1976cost}, Dutta and Kar~\cite{dutta2004cost}, Norde \textit{et al.}~\cite{norde2004minimum}, Tijs \textit{et al.}~\cite{tijs2006obligation} and Hougaard \textit{et al.}~\cite{hougaard2010decentralized} provided cost sharing mechanisms based on the minimum spanning tree of a graph.

However, they do not satisfy truthfulness since the agents can change the minimum spanning tree by cutting their adjacent edges to reduce their cost share. We take the Bird rule~\cite{bird1976cost} for an example to show the problem. Under the Bird rule, the cost share of an agent is the cost of the edge that connects it to the (growing) spanning tree by Prim's algorithm~\cite{prim1957shortest} starting from the source. Consider Figure~\ref{Bird_rule}, when the agent $b$ does not cut the edge $(a,b)$, its cost share is 3. When it cuts the edge, its cost share is $2 < 3$. 
\begin{figure}[htb]
    \centering
    \includegraphics[width=5cm]{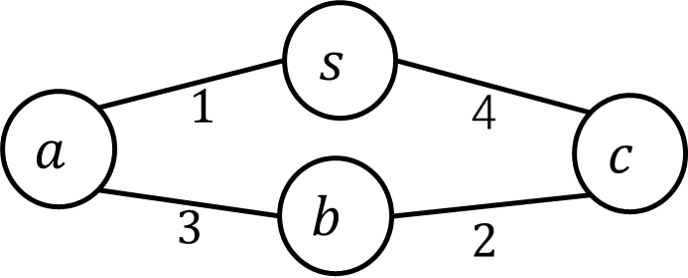}
    \caption{The $s$ represents the source, $a,b,c$ represent the agents, and the numbers on the edges represent the cost for the connectivity.}
    \label{Bird_rule}
\end{figure}

Other solutions treated the problem from the cooperative game perspective. They are all based on the Shapley value~\cite{shapley1953value} and differ in the definition of the value of each coalition.
We briefly describe them in the following.

Kar~\cite{kar2002axiomatization} proposed the Kar solution and defined the value of a coalition $S$ as the minimal cost of connecting all agents of $S$ to the source without going through the agents outside of $S$.
However, the Kar solution does not satisfy core selection.

To satisfy core selection, Berganti{\~n}os and Vidal-Puga~\cite{bergantinos2007fair} proposed the folk solution. They first compute the irreducible cost matrix, and then define the value of a coalition $S$ in the same way as the Kar solution. However, the folk solution throws away most information of the original graph. 

To obtain a core allocation without throwing away as much information as the folk solution, Trudeau~\cite{trudeau2012new} proposed the cycle-complete solution. They made a less extreme transformation of the cost matrix. 

Berganti{\~n}os and Vidal-Puga~\cite{bergantinos2007optimistic} proposed the optimistic game-based solution, where the value of a coalition $S$ is defined as the minimal cost of connecting all agents of $S$ to the source given that agents outside of $S$ are already connected to the source and the agents in $S$ can connect to the source through them. 

In summary, the existing solutions for the classic cost sharing problem on complete graphs do not consider the situation where agents need to report their valuations and adjacent edges, as is shown before, they do not guarantee to satisfy feasibility and truthfulness.

\section{The Model}
\label{The model}
We consider a cost sharing problem to connect the nodes in a weighted undirected graph $G=\langle V \cup \{s\},E \rangle$. The weight of the edge $(i,j) \in E$ denoted by $c_{(i,j)} \geq 0$ represents the cost to use the edge to connect $i$ and $j$. All the nodes in set $V$ want to connect to the source node $s$. The total cost of the connectivity has to be shared among all connected nodes except for $s$. Each node $i \in V$ has a private valuation $v_i \geq 0$, which is the maximum cost that it is willing to share. 

Given the graph, the minimum cost of connecting the nodes is the weight of the minimum Steiner tree~\cite{mehlhorn1988faster} (we assume that the graph is connected). The minimum Steiner tree of a set of nodes is a tree with the minimum weight that contains these nodes (it may include the nodes outside the set). The question here is how the nodes share this cost. We also consider two natural strategic behaviors of each node except for the source, i.e., cutting its adjacent edges and misreporting its valuation. An edge $(i,j)$ cannot be used for connectivity if $i$ or $j$ cuts it. 
Our goal is to design cost sharing mechanisms to incentivize nodes to report their valuations truthfully and also offer all their adjacent edges so that we can use all the edges to minimize the total cost of the connectivity. 

Formally, let $e_i (i \in V \cup \{s\})$ be the set of $i$'s adjacent edges and $\theta_i=(e_i,v_i)$ be the {\em type} of $i$. Let $\theta=(\theta_1,\cdots,\theta_{|V|+1})$ be the type profile of all nodes including the source $s$ (the valuation of $s$ is $null$). We also write $\theta=(\theta_i,\theta_{-i})$, where $\theta_{-i}=(\theta_1,\cdots,\theta_{i-1},\theta_{i+1},\cdots,\theta_{|V|+1})$ is the type profile of all nodes except for $i$. Let $\Theta_i$ be the type space of $i$ and $\Theta$ be the type profile space of all nodes (which generates all possible graphs containing $V \cup \{s\}$). 

We design a cost sharing mechanism that asks each node to report its valuation and the set of its adjacent edges that can be used for the connectivity. Let $\theta_i'=(e_i',v_i')$ be the report of $i$ where $e_i' \subseteq e_i$ and $v_i' \geq 0$, and $\theta'=(\theta_1',\cdots,\theta_{|V|+1}')$ be the report profile of all nodes. 
Given a report profile $\theta' \in \Theta$, the graph induced by $\theta'$ is denoted by $G(\theta')=\langle V \cup \{s\} , E(\theta') \rangle \subseteq \langle V \cup \{s\}, E \rangle$, where $E(\theta') = \{(i,j)| (i,j) \in (\theta_i' \cap \theta_j')\}$. Finally, let $r_i(\theta')\subseteq V$ be the set of $i$'s neighbour nodes.  

\begin{definition}
\label{def2}
A cost sharing mechanism consists of a node selection policy $g: \Theta \rightarrow 2^V$, an edge selection policy $f: \Theta \rightarrow 2^E$, and a cost sharing policy $x: \Theta \rightarrow \mathbb{R}^{|V|}$. Given a report profile $\theta' \in \Theta$, $g(\theta')\subseteq V$ selects the nodes to be connected, $f(\theta') \subseteq E(\theta')$ selects the edges to connect the selected nodes $g(\theta')$, and $x(\theta')=(x_i(\theta'))_{i \in V}$, where $x_i(\theta')$ is the cost share of $i$, which is zero if $i \notin g(\theta')$.
\end{definition}


For simplicity, we use $(g,f,x)$ to denote a cost sharing mechanism. Given a report profile $\theta' \in \Theta$, the utility of a node $i \in V$ under $(g,f,x)$ is defined as
\begin{equation*}
    u_i(\theta')=
    \begin{cases}
    v_i-x_i(\theta')& \text{if $i \in g(\theta')$,}\\
    0& \text{otherwise.}
    \end{cases}
\end{equation*}
In the following, we introduce the desirable properties of a cost sharing mechanism.

Feasibility requires that the cost share of each node is not over its reported valuation.

\begin{definition}
	\label{feasible}
	A cost sharing mechanism $(g,f,x)$ satisfies \textbf{feasibility} if $x_i(\theta') \leq v_i'$ for all $i \in V$, for all $\theta' \in \Theta$. 
\end{definition}

Truthfulness states that each node cannot increase its utility by cutting its adjacent edges and misreporting its valuation. Note that the source does not behave strategically in this setting. 

\begin{definition}
	\label{defic}
	A cost sharing mechanism $(g,f,x)$ satisfies \textbf{truthfulness} if $u_i((\theta_i, \theta'_{-i})) \geq u_i((\theta_i', \theta'_{-i})),$ for all $i\in V$, for all $\theta_i, \theta_i' \in \Theta_i$, and for all $\theta_{-i}'\in \Theta_{-i}=\Theta \setminus \Theta_i$. 
\end{definition}

Individual rationality requires that each node's utility is non-negative when it reports its type truthfully no matter what the others do. 
\begin{definition}
\label{defir}
A cost sharing mechanism $(g,f,x)$ satisfies \textbf{individual rationality (IR)} if $u_i(\theta_i,\theta_{-i}') \geq 0$ for all $i \in V$, for all $\theta_i \in \Theta_i$, and for all $\theta_{-i}' \in \Theta_{-i}=\Theta \setminus \Theta_i$.  
\end{definition}

Utility monotonicity states that for each selected node, its utility will weakly decrease if the cost of one of its adjacent edges increases under the same report profile.

\begin{definition}

A cost sharing mechanism $(g,f,x)$ satisfies \textbf{utility monotonicity (UM)} if $u_i(\theta') \geq u_i^{+}(\theta')$ for all $\theta' \in \Theta$ and for all $i \in V$, where $u_i^{+}(\theta')$ is $i$'s utility when the cost of the edge $(i,j) \in \theta_i'$ increases.
\end{definition}

We also require that the sum of all nodes' cost share equals the total cost of the selected edges for any report profile. That is, the mechanism has no profit or loss.

\begin{definition}
A cost sharing mechanism $(g,f,x)$ satisfies \textbf{budget balance (BB)} if $\sum_{i \in V}x_i(\theta')=\sum_{(i,j) \in f(\theta')}c_{(i,j)}$ for all $\theta' \in \Theta$.
\end{definition}

The ranking property requires that for any nodes $i$ and $j$ that have the same reported valuations and the same neighbour nodes except for $i$ and $j$, if the cost of the edge $(i,k)$ is less expensive than the edge $(j,k)$ for any neighbour node $k$, then the utility of $i$ should be larger than $j$.

\begin{definition}
A cost sharing mechanism $(g,f,x)$ satisfies \textbf{ranking} if for all $\theta' \in \Theta$, for all $i, j \in V$ with $r_i(\theta') \setminus \{j\} = r_j(\theta') \setminus \{i\} $ and $v_i'=v_j'$ (assume $v_i'=v_i$ and $v_j'=v_j$), we have $c_{(i,k)} \leq c_{(j,k)}$ for all $k \in r_i(\theta') \setminus \{j\}$ implies $u_i(\theta') \geq u_j(\theta')$.
\end{definition}


Symmetry says nodes that play the same role obtain the same utility.
\begin{definition}
A cost sharing mechanism $(g,f,x)$ satisfies \textbf{symmetry} if for all $\theta' \in \Theta$, for all $i, j \in V$ with $r_i(\theta') \setminus \{j\} = r_j(\theta') \setminus \{i\} $ and $v_i'=v_j'$ (assume $v_i'=v_i$ and $v_j'=v_j$), we have $c_{(i,k)} = c_{(j,k)}$ for all $k \in r_i(\theta') \setminus \{j\}$ implies $u_i(\theta') = u_j(\theta')$.
\end{definition}

Finally, each node's cost share should be non-negative.

\begin{definition}
A cost sharing mechanism $(g,f,x)$ satisfies \textbf{positiveness} if $x_i(\theta')\geq 0$ for all $i \in V$ and for all $\theta' \in \Theta$.  
\end{definition}

In the rest of the paper, we design cost sharing mechanisms to satisfy the above properties.


\section{Impossibility Results}
In this section, we establish some impossibility results. We first introduce some extra notions. 

\begin{definition}
For a given subset $S \subseteq V$, the social welfare (SW) of $S$ is $$SW(S)=\sum_{i \in S}v_i'-C(S),$$ where $v_i'$ is the reported valuation of node $i$ and $C(S)$ is the minimum cost of connecting all the nodes in $S$ (i.e., the weight of the minimum Steiner tree of $S\cup\{s\}$).
\end{definition}

Intuitively, social welfare represents the profit of the selected nodes. 

\begin{definition}
Given $\theta' \in \Theta$, a mechanism $(g,f,x)$ satisfies \textbf{efficiency} if it selects $g(\theta') \subseteq V$ such that its social welfare is maximized, i.e., $$SW(g(\theta'))={\underset{\forall S \subseteq V}{{\max}\ SW(S)}.}$$ 
\end{definition}
For simplicity, we use $\delta(V)$ to denote a subset of $V$ that has the maximal social welfare, i.e.,
$$
\delta(V)={\underset{ \forall S \subseteq V}{{\arg\max}\ SW(S)}.}
$$
The computation of $\delta(V)$ is described as follows.
        
\begin{framed}
\label{maxSW}
 \noindent\textbf{Algorithm 1: Compute $\delta(V)$}
 
 \noindent\rule{\textwidth}{0.5pt}
 
 \noindent\textbf{Input}: A report profile $\theta'$ and a set of nodes V
 
 \noindent\rule{\textwidth}{0.5pt}
 
 \begin{enumerate}
 \item Set: $\delta(\emptyset)=\emptyset$, $SW(\emptyset)=0$, and \\ \textcolor{white}{Set:} $P(V)$ to be the power set of $V$.
 \item Sort all the elements of $P(V)$ in an ascending order by their cardinalities.
 \item For $S \in P(V)\setminus \{\emptyset\}$:
 \begin{itemize}
     \item Get $Q(S)=\{S'|S'\subset S, |S'|=|S|-1 \}$.
     \item Let $\delta(S) =\mathop{\arg\max}\limits_{S'\in Q(S)} SW(\delta(S'))$.
     \item \textbf{If} $\sum_{i\in S} v_i'-C(S)\geq SW(\delta(S))$, set $\delta(S)=S$.
     \item Get $SW(\delta(S))=\sum_{i\in \delta(S)} v_i'-C(\delta(S))$.
 \end{itemize}
 \end{enumerate}
 
 \noindent\rule{\textwidth}{0.5pt}
 
 \noindent\textbf{Output}: The subset $\delta(V)$, \\ \textcolor{white}{\noindent\textbf{Output:}:} the maximum social welfare $SW(\delta(V))$
\end{framed}

\begin{figure}[htb]
    \centering
    \includegraphics[width=5cm]{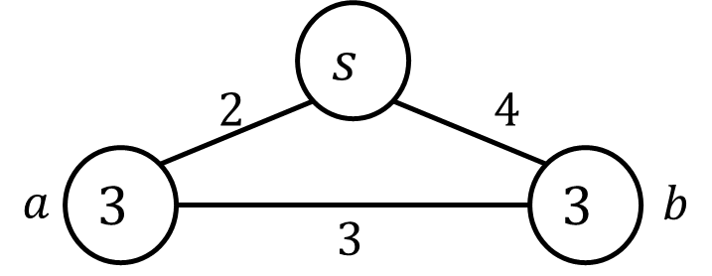}
    \caption{The $s$ represents the source, $a$ and $b$ represent the nodes, the numbers in the circles represent the reported valuations of nodes, and the numbers on the edges represent the cost for the connectivity.}
	 \label{deltaS}
\end{figure}
A running example of Algorithm 1 is given in Figure~\ref{deltaS}. Assume $c_{(s,a)}=2, c_{(s,b)}=4, c_{(a,b)}=3, v_a'=3, v_b'=3$. By Algorithm 1, we have $\delta(\{a\})=\{a\}, \delta(\{b\})=\emptyset$ and $\delta(\{a,b\})=\{a,b\}$.

\begin{proposition}
\label{impo2}
There exists no cost sharing mechanism which satisfies truthfulness, feasibility, efficiency, and budget balance simultaneously.
\end{proposition}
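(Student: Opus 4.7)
The plan is to exhibit a two-agent instance on which the four properties collapse, by combining the efficient selection with feasibility plus a low-valuation deviation.

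Take $V=\{a,b\}$ with source $s$, costs $c_{(s,a)}=c_{(s,b)}=2$, $c_{(a,b)}=0$, and truthful valuations $v_a=v_b=1.5$. Running the social welfare computation over the four candidate subsets yields $SW(\{a,b\})=3-2=1$, $SW(\{a\})=SW(\{b\})=-0.5$, and $SW(\emptyset)=0$, so efficiency uniquely forces $g(\theta)=\{a,b\}$ at the truthful profile $\theta$. Any tree connecting $\{a,b,s\}$ must include $(a,b)$ together with one of $(s,a),(s,b)$, so the minimum Steiner weight is $2$, and budget balance therefore pins down
\[
x_a(\theta)+x_b(\theta)=2.
\]

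Next I would use a low-bid deviation to upper-bound each cost share. Let $\theta^{(b)}$ be the profile obtained from $\theta$ by replacing $v_b$ with $v_b'=0.6$. Recomputing gives $SW(\{a,b\})=0.1$, $SW(\{a\})=-0.5$, $SW(\{b\})=-1.4$, $SW(\emptyset)=0$, so efficiency still uniquely selects $\{a,b\}$ at $\theta^{(b)}$. Feasibility then yields $x_b(\theta^{(b)}) \leq v_b' = 0.6$. Since $b$ is selected in both $\theta$ and $\theta^{(b)}$, truthfulness applied to the valuation dimension gives
\[
v_b - x_b(\theta) \;=\; u_b(\theta) \;\geq\; u_b(\theta^{(b)}) \;=\; v_b - x_b(\theta^{(b)}),
\]
so $x_b(\theta) \leq x_b(\theta^{(b)}) \leq 0.6$. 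A symmetric argument with $a$ deviating to $v_a'=0.6$ gives $x_a(\theta) \leq 0.6$. Adding the two bounds produces $x_a(\theta)+x_b(\theta) \leq 1.2$, contradicting the BB equation $x_a(\theta)+x_b(\theta)=2$.

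The one point that needs care (and is essentially the only nontrivial step) is making sure the efficient selection is the \emph{same} set $\{a,b\}$ at both $\theta$ and $\theta^{(b)}$; otherwise truthfulness would degrade to the much weaker individual-rationality-style inequality $v_b-x_b(\theta)\ge 0$ and the argument would break. Choosing the deviation strictly above the threshold $0.5$ keeps $SW(\{a,b\})$ strictly greater than $\max(0,SW(\{a\}),SW(\{b\}))$, so the selection is preserved and the feasibility bound $x_b(\theta^{(b)})\le v_b'$ actually transfers back to $\theta$. Any choice $v_b'\in(0.5,1.5)$ works; picking $0.6$ just gives a clean numerical contradiction.
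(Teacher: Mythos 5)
Your proof is correct. Every step checks out: efficiency uniquely forces $g(\theta)=\{a,b\}$ at both the truthful profile and each deviation profile (since $SW(\{a,b\})$ stays strictly above $SW(\emptyset)=0$ and the singletons), feasibility at the deviation profile caps the deviator's cost share at $0.6$, truthfulness transfers that cap back to the truthful profile because the selection outcome is unchanged, and summing the two caps contradicts the budget-balance total of $2$. You also correctly flagged the one delicate point, namely that the deviation must keep $\{a,b\}$ efficient so that truthfulness compares two ``selected'' utilities rather than collapsing to an IR-type bound.

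The route differs from the paper's. The paper works on a line graph $s-a-b$ with $v_a>m$ and $v_b>m+n$, where $b$ must route through $a$, and argues by cases on $a$'s cost share: if $x_a>0$ then $a$ profitably deviates to reporting valuation $0$ (it remains on the Steiner tree for $b$, so feasibility forces its payment to $0$); if $x_a=0$ then budget balance loads the full cost $m+n$ onto $b$, who profitably shades down to $n$. Your argument instead uses a symmetric instance with a free edge $(a,b)$ and a single uniform step: each agent's payment is bounded by (slightly more than) its marginal-contribution threshold via a shading deviation, and the sum of these thresholds falls short of the total cost. This avoids the paper's case split entirely and is closer in spirit to the classic Green--Laffont-style tension between efficiency and budget balance; it is also arguably more robust, since the paper's second case relies on the deviation $v_b''=n$ still getting $b$ selected, which is only a tie under efficiency, whereas your choice of $v_b'=0.6$ keeps the efficient set strictly unique. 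Incidentally, your instance closely resembles the one the paper uses later for the budget-balance-ratio impossibility (Proposition~\ref{BBR}) rather than the line graph it uses here.
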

\begin{figure}[htb]
    \centering
    \includegraphics[width=5cm]{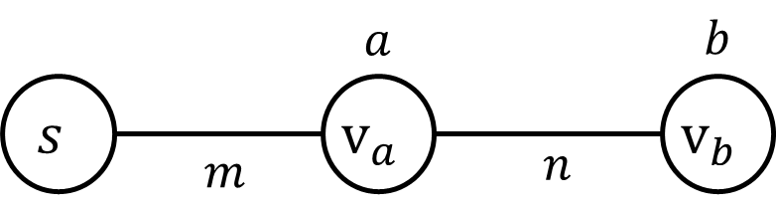}
    \caption{The $s$ represents the source and $a,b$ are the nodes with the valuations $v_a,v_b$ respectively. The cost of the edges $(s,a)$ and $(a,b)$ are $m$ and $n$ respectively.}
	\label{impossible_ex12}
\end{figure}
\begin{proof}

We only need to consider a simple line graph in Figure~\ref{impossible_ex12}. We show that when feasibility, efficiency, and budget balance are satisfied, truthfulness will be violated. Assume that $v_a>m$ and $v_b > m+n$. 
By efficiency, when $a$ and $b$ truthfully report $ v_a'=v_a, v_b'=v_b$, $a$ and $b$ are both selected by the mechanism since $(v_a + v_b) - (m+n)> v_a -m$.
\begin{itemize}
    \item When $x_a(\theta')> 0$, if node $a$ reports $v_a''=0$, by feasibility, we have $x_a(\theta'')=0$ and the utility of $a$ increases. Hence, node $a$ has the motivation to misreport.
    \item When $x_a(\theta')=0$, by budget balance, we have $x_b(\theta')=m+n$. If node $b$ reports $v_b''=n$, by feasibility and budget balance, we have $x_b(\theta'')=n$, and the utility of $b$ increases. Hence, node $b$ has the motivation to misreport.         
\end{itemize}

Therefore, node $a$ or node $b$ has the motivation to misreport its valuation, i.e., truthfulness is violated.
\end{proof}

We further show that when truthfulness, feasibility, and budget balance are satisfied, the maximal social welfare cannot be approximated.
\begin{definition}
\label{lower}
A mechanism is $\alpha^{lb}$-approximate ($\alpha^{lb} \in (0,1)$) to the social welfare if $SW(g(\theta'))\geq \alpha^{lb} \cdot SW^*(S)$, where $SW^*(S)$ is the maximal social welfare given $\forall \theta' \in \Theta$ and $lb$ represents that $\alpha^{lb}$ is a lower bound of the ratio $\frac{SW(g(\theta'))}{SW^*(S)}$.
\end{definition}

\begin{definition}
\label{upper}
A mechanism is $\beta^{ub}$-approximate ($\beta^{ub} \in (0,1)$) to the social welfare if $SW(g(\theta')) \leq \beta^{ub} \cdot SW^*(S)$, where $SW^*(S)$ is the maximal social welfare given $\forall \theta' \in \Theta$ and $ub$ represents that $\beta^{ub}$ is an upper bound of the ratio $\frac{SW(g(\theta'))}{SW^*(S)}$.
\end{definition}

\begin{proposition}
There exists no cost sharing mechanism that satisfies truthfulness, budget balance, feasibility, and $\alpha^{lb}$-approximation ($\beta^{ub}$-approximation) simultaneously.
\end{proposition}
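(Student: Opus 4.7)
The plan is to extend the impossibility of Proposition~\ref{impo2} by first characterizing the structure of any mechanism satisfying truthfulness, budget balance, and feasibility on the line graph $s{-}a{-}b$ of Figure~\ref{impossible_ex12}, and then exhibiting a ``bad'' instance for every such mechanism.

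First I would establish a Myerson-style characterization. Whenever the mechanism selects $\{a,b\}$ at some report profile, truthfulness forces the cost share $x_a$ to depend only on $v_b$ and $x_b$ to depend only on $v_a$; combined with the identity $x_a+x_b=m+n$ enforced by budget balance, both shares collapse to constants $A$ and $B$ with $A+B=m+n$, and $\{a,b\}$ is selected iff $v_a\ge A$ and $v_b\ge B$. An analogous argument gives the thresholds $v_a\ge m$ for selecting $\{a\}$ alone and $v_b\ge m+n$ for $\{b\}$ alone (with cost shares equal to the corresponding connection costs by budget balance).

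Second, for any such $A\in[0,m+n]$, I would consider the instance $(v_a,v_b)=(A-\delta,\,B+\delta')$ with small $\delta'>\delta>0$. Here $\{a,b\}$ is not selectable, and the mechanism's best response is either $\emptyset$ (when $A\le m$, since neither $\{a\}$ nor $\{b\}$ is available for small $\delta,\delta'$, giving approximation ratio $0$) or $\{a\}$ (when $A>m$, yielding ratio at most $(A-\delta-m)/(\delta'-\delta)\to(A-m)/A\le n/(m+n)$ as $\delta\to 0$ and $\delta'\to A$). In either case the ratio is bounded above by $n/(m+n)$. Hence for any $\alpha^{lb}\in(0,1)$ I would choose the edge costs so that $n/(m+n)<\alpha^{lb}$ (for example $n=1$ and $m>\lceil 1/\alpha^{lb}\rceil$), and the resulting instance contradicts $\alpha^{lb}$-approximation. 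For the $\beta^{ub}$-approximation claim the argument is symmetric: one instead picks an instance with $v_a,v_b$ well above every threshold, where the characterization forces the mechanism to select $\{a,b\}$ with $SW=SW^*$, so the ratio equals $1$ and violates any $\beta^{ub}<1$.

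The main obstacle is the characterization in the first step. Unlike the standard single-parameter Myerson setting, each agent can also strategically cut adjacent edges, enlarging its strategy space; one must verify that monotonicity of the allocation in the valuation dimension (for fixed, truthfully reported edge sets) is still forced by truthfulness, so that the constant-threshold conclusion remains. Once this is in hand, the bad-instance construction and the case split on $A$ are routine, and the flexibility given by approximation (versus the uniqueness exploited in Proposition~\ref{impo2}) is absorbed by the parameter choice $n/(m+n)<\alpha^{lb}$.
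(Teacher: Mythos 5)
Your route is genuinely different from the paper's, and the extra machinery is where it breaks. The paper's proof is a short corollary of Proposition~\ref{impo2}: on the line $s$--$a$--$b$ it fixes $v_a>m$, sets $v_b=n+p$, argues (via the misreporting argument of Proposition~\ref{impo2}) that a truthful, feasible, budget-balanced mechanism can select at most $\{a\}$, and reads off the ratio $\frac{v_a-m}{v_a-m+p}$, which tends to $0$ as $p\to\infty$ (killing any $\alpha^{lb}>0$) and to $1$ as $p\to 0$ (killing any $\beta^{ub}<1$). You instead try to classify all such mechanisms on the line via a taxation-principle characterization; that characterization is not needed for the result, and as stated it is not correct.

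Two steps would fail. First, the ``iff'' in your characterization: truthfulness, feasibility and budget balance yield at most the necessity direction ($\{a,b\}$ selected $\Rightarrow v_a\ge A$ and $v_b\ge B$); nothing forces the mechanism to select $\{a,b\}$ (or anything) when the thresholds are met, and the constancy of $A$ and $B$ itself presupposes that the pair-selection region is nonempty and connected. Your $\beta^{ub}$ half relies entirely on the sufficiency direction (``the characterization forces the mechanism to select $\{a,b\}$ with $SW=SW^*$''), so it collapses; you would have to pin down what the mechanism actually does on that instance, which is exactly the role Proposition~\ref{impo2} plays in the paper. Second, in the $\alpha^{lb}$ half the bound $(A-m)/A\le n/(m+n)$ uses $A\le m+n$, i.e.\ $B\ge 0$, but positiveness is not among the assumed properties, so $B$ may be negative and $A$ arbitrarily large; your parameters are also inconsistent ($\delta'$ must lie in $(A-m,A)$, hence is not ``small'', and for $A\le\delta$ the valuation $v_a=A-\delta$ is negative and outside the type space); and the case where the mechanism never selects the pair is not covered at all. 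The paper's single parametric family avoids all of these case distinctions.
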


\begin{proof}

It suffices to consider a simple line graph in Figure~\ref{impossible_ex12}. Without loss of generality, assume that $v_a > m,v_b=n+p (p>0),c_{(s,a)}=m$ and $c_{(a,b)}=n$. When $a$ and $b$ are both selected by the mechanism, the maximum social welfare is ($v_a - m + p$).

Next, we show when truthfulness, feasibility, and budget balance are satisfied, $\alpha^{lb}$-approximation and $\beta^{ub}$-approximation will be violated. By Proposition~\ref{impo2}, any mechanism cannot select both $a$ and $b$ and it can only select $a$. Then the social welfare is $(v_a - m)$. Hence, the ratio equals $\frac{v_a-m}{v_a-m+p}$. Letting $p \rightarrow \infty$, then the ratio approaches 0. Therefore, the required $\alpha^{lb}$ does not exist. Again letting $p \rightarrow 0$, then the ratio approaches $1$, which means that the required $\beta^{ub}$ does not exist.

Therefore, any cost sharing mechanism cannot satisfy truthfulness, budget balance, feasibility, and $\alpha^{lb}$-approximation ($\beta^{ub}$-approximation) simultaneously.
\end{proof}

We further consider the deficit of any mechanism that satisfies truthfulness, feasibility, and efficiency. We introduce a concept called budget balance ratio to evaluate it.

\begin{definition}
\label{bbr}
A mechanism has a budget balance ratio (BBR) called $\gamma \in (0,1]$ if $\sum_{i\in g(\theta')}x_i(\theta')\geq \gamma \cdot C(g(\theta'))$, $\forall \theta' \in \Theta$. 
\end{definition}

\begin{proposition}
\label{BBR}
A cost sharing mechanism that satisfies truthfulness, feasibility and efficiency does not have a BBR. 
\end{proposition}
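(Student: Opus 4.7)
The plan is to show that for any prospective constant $\gamma\in(0,1]$, one can produce a report profile on which a mechanism satisfying truthfulness, feasibility, and efficiency violates $\sum_i x_i(\theta')\ge \gamma\cdot C(g(\theta'))$. I would reuse the line graph of Figure~\ref{impossible_ex12} with $c_{(s,a)}=m$, $c_{(a,b)}=n$, and truthful valuations $v_a=v_b=V$ for a large $V>m+n$. Comparing $SW(\{a,b\})=2V-(m+n)$, $SW(\{a\})=V-m$, and $SW(\{b\})=V-(m+n)$ shows that $\{a,b\}$ is the unique social-welfare maximizer, so efficiency forces $g(\theta')=\{a,b\}$ and $C(g(\theta'))=m+n$.

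The heart of the argument is a pair of vanishing-perturbation deviations that pin each agent's cost share from above. For any $\epsilon>0$, if $a$ deviates to $v_a''=\epsilon$ (keeping all adjacent edges), one checks that $\{a,b\}$ remains the strictly unique efficient set under $\theta''$, so feasibility gives $x_a(\theta'')\le \epsilon$; truthfulness then forces $V-x_a(\theta')\ge V-x_a(\theta'')\ge V-\epsilon$, i.e.\ $x_a(\theta')\le \epsilon$. Since $\epsilon$ is arbitrary, $x_a(\theta')\le 0$. Symmetrically, if $b$ deviates to $v_b''=n+\epsilon$, then $SW(\{a,b\})-SW(\{a\})=\epsilon>0$, so $\{a,b\}$ is again strictly efficient under $\theta''$ and feasibility yields $x_b(\theta'')\le n+\epsilon$; truthfulness pins $x_b(\theta')\le n+\epsilon$, and letting $\epsilon\downarrow 0$ gives $x_b(\theta')\le n$.

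Summing, $\sum_{i\in g(\theta')} x_i(\theta')\le n$, so the ratio $\sum_i x_i(\theta')/C(g(\theta'))\le n/(m+n)$. Sending $m\to\infty$ (and scaling $V$ correspondingly to preserve $V>m+n$) drives this ratio to $0$, contradicting any fixed $\gamma>0$ in Definition~\ref{bbr}. The main subtlety I expect is the efficiency tie-breaking that would appear at the boundary reports $v_a''=0$ or $v_b''=n$, since at those values the mechanism could legitimately drop $a$ (or $b$) from $g$ and the feasibility argument would no longer bite; I sidestep this by using strictly positive perturbations so that $\{a,b\}$ is the unique efficient set, and then passing to the limit to recover the sharp bounds $x_a(\theta')\le 0$ and $x_b(\theta')\le n$.
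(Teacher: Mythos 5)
Your proof is correct, but it takes a genuinely different route from the paper's. The paper picks a single fixed instance that kills every $\gamma$ at once: a triangle on $\{s,a,b\}$ with $c_{(s,a)}=c_{(s,b)}=v_a=v_b=m$ and a \emph{free} edge $c_{(a,b)}=0$, so that efficiency forces $g(\theta')=\{a,b\}$ with $C(g(\theta'))=m>0$ while feasibility and truthfulness force $x_a(\theta')=x_b(\theta')=0$, giving ratio exactly $0$ (the paper asserts this last step without spelling out the deviation argument). You instead work on the two-node line graph, make the truthfulness-plus-feasibility squeeze fully explicit via $\epsilon$-perturbed reports ($v_a''=\epsilon$ and $v_b''=n+\epsilon$, each keeping $\{a,b\}$ strictly uniquely efficient so the selection cannot change), conclude $x_a(\theta')\le 0$ and $x_b(\theta')\le n$ in the limit, and then defeat any fixed $\gamma$ by sending $m\to\infty$. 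The trade-off: the paper's zero-cost-edge instance is more economical (one profile suffices and the ratio is identically $0$), whereas your argument requires a family of instances indexed by $\gamma$; on the other hand, your write-up supplies exactly the deviation reasoning and the tie-breaking care at the boundary reports that the paper's one-line claim ``by feasibility and truthfulness, $x_a(\theta')=x_b(\theta')=0$'' leaves implicit, so it is arguably the more rigorous of the two. Both establish the proposition as stated in Definition~\ref{bbr}.
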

\begin{figure}[htb]
    \centering
    \includegraphics[width=5cm]{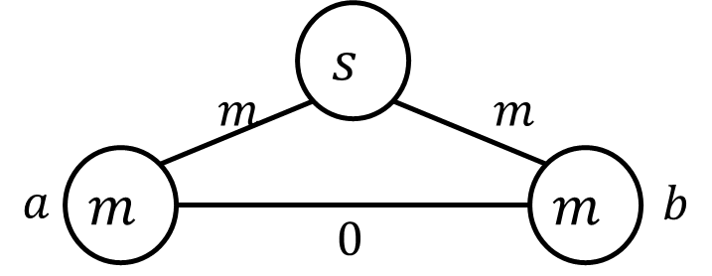}
    \caption{The $s$ represents the source and $a,b$ are the nodes with the valuation $m$. The cost of the edges $(s,a)$, $(a,b)$ and $(s,b)$ are $m,0,m$ respectively.}
	 \label{Impossible_ex3}
\end{figure}
\begin{proof}
According to Definition~\ref{bbr}, a cost sharing mechanism having a BBR $\gamma \in (0,1]$ needs to satisfy the following: $\forall \theta' \in \Theta$, $\frac{\sum_{i\in g(\theta')}x_i(\theta')}{C(g(\theta'))}\geq \gamma > 0$. So, to prove the proposition, it suffices to find a $\theta'$ such that $\frac{\sum_{i\in g(\theta')}x_i(\theta')}{C(g(\theta'))}= 0$.

Without loss of generality, as shown in Figure~\ref{Impossible_ex3}, we assume $V=\{a,b\}$, $v_a=v_b=m$, $c_{(s,a)}=c_{(s,b)}=m$, $c_{(a,b)}=0$. By efficiency, the mechanism should select both $a$ and $b$. By truthfulness, each node offers all its adjacent edges and reports its valuation truthfully. By feasibility and truthfulness, we have $x_a(\theta')=x_b(\theta')=0$. Thus, we have $\frac{\sum_{i\in g(\theta')}x_i(\theta')}{C(g(\theta'))}=\frac{0}{m}=0$.
\end{proof}

Note that there exists a trivial cost sharing mechanism where each node pays 0 and all the nodes in $V$ are selected by the mechanism. This mechanism satisfies truthfulness and feasibility but does not satisfy efficiency and budget balance. 

By Proposition~\ref{impo2}, a cost sharing mechanism cannot simultaneously satisfy truthfulness, feasibility, efficiency, and budget balance. Therefore, we propose two feasible mechanisms satisfying truthfulness, respectively together with efficiency and budget balance in the following sections. 

We summarize the impossibility results and our mechanisms in Table~\ref{tab0}.

\begin{table}[htb]
\caption{"Null" means that there exists no mechanism satisfying all the marked properties in the column. Our mechanisms RSM and CVM satisfy the marked properties in the column.}
\label{tab0}
\resizebox{0.9\linewidth}{!}{
\begin{tabular}{|c|c|c|c|c|}
\hline
Truthfulness & Feasibility & Budget Balance & Efficiency & Mechanism\\
\hline
\checkmark&\checkmark & \checkmark & \checkmark & NULL\\
\checkmark&\checkmark & \checkmark & $\alpha^{lb}$($\beta^{ub}$) & NULL\\
\checkmark&\checkmark & $\gamma$ & \checkmark & NULL\\
\checkmark&\checkmark & & \checkmark& CVM\\
\checkmark&\checkmark &           \checkmark & & RSM\\
\hline
\end{tabular}}

\end{table}


\section{Critical Value Based Mechanism}
\label{csm-ub}
In this section, we propose a cost sharing mechanism that satisfies truthfulness, feasibility, and efficiency but does not satisfy budget balance. In addition, we show that it also satisfies other desirable properties. 

The key ideas of the mechanism are as follows. First, find out the node set which has maximal social welfare. Second, for each node in the set, compute its critical value (CV), i.e., the minimal reported valuation that keeps it in the set. 

Finally, let the cost share of the node in the set equal its critical value and the others' cost share is 0. 

The computation of the minimum reported valuation of each node $i \in g(\theta')$ is as follows. We first compute $\delta(g(\theta')\setminus\{i\})$, the set of nodes that maximizes the social welfare when node $i$ is not considered. Then we compute the social welfare of $g(\theta')$ and $\delta(g(\theta')\setminus\{i\})$. Next, we find out the minimum reported valuation of $i$ that keeps it in $g(\theta')$ and guarantees $SW(g(\theta'))=SW(\delta(g(\theta')\setminus\{i\}))$.  

The mechanism is formally described as follows. A running example is given after the algorithm.  

\begin{framed}
 \noindent\textbf{Critical Value Based Mechanism (CVM)}
 
 \noindent\rule{\textwidth}{0.5pt}
 
 \noindent\textbf{Input}: A report profile $\theta'$ and a graph $G(\theta')$
 
 \noindent\rule{\textwidth}{0.5pt}
 
 \begin{enumerate}
 \item Run Algorithm 1 and get $g(\theta') = \delta(V)$.
 \item Compute the minimum Steiner tree of $g(\theta') \cup \{s\}$ and \\ set $f(\theta')$ to be the set of edges in the tree.
 \item For $i \in g(\theta')$:
 \begin{itemize}
     \item Compute node $i$'s critical value
		\begin{equation}
		\label{critical}
		\begin{split}
	    CV_i(\theta')=&(\sum_{j \in \delta(g(\theta')\setminus \{i\})}v_j'-C(\delta(g(\theta')\setminus \{i\})))\\
		 &-(\sum_{k \in g(\theta')\setminus \{i\}}v_k'-C(g(\theta'))),\\    
		\end{split}
		\end{equation}
		where $\delta(\cdot)$ is defined in Algorithm 1.
 \end{itemize}
 	\item Set $x_i(\theta')=CV_i(\theta')$.
 \end{enumerate}
 
 \noindent\rule{\textwidth}{0.5pt}
 
 \noindent\textbf{Output}: The selected nodes $g(\theta')$, the selected edges $f(\theta')$, the cost sharing $x(\theta')$
\end{framed}
\begin{figure}[htb]
    \centering
\includegraphics[width=9cm]{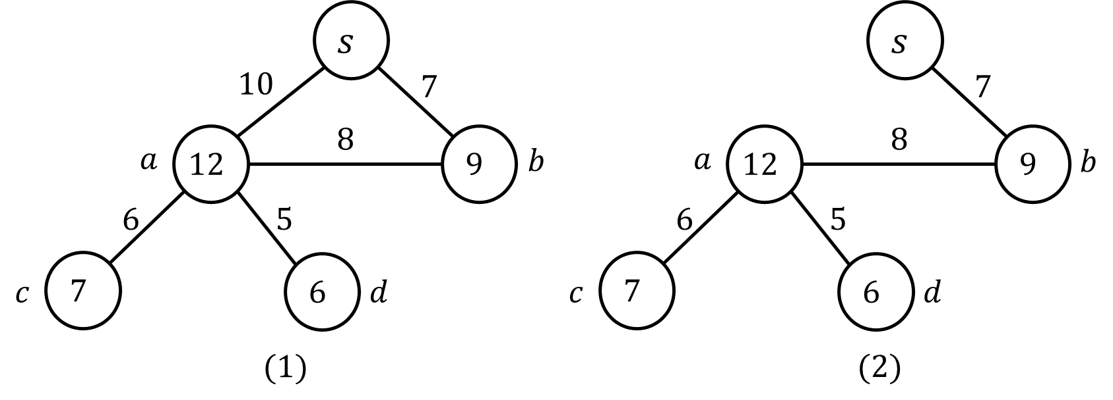}
    \caption{The left figure is $G(\theta')$ and the right figure is the minimum Steiner tree of $g(\theta') \cup \{s\}$. \iffalse The $s$ represents the source, $a,b,c,d$ represent the nodes, the numbers on the edges represent the cost, and the numbers in the circles represent the valuations of the nodes.\fi}
	 \label{1110}
\end{figure}

\begin{example}
The graph $G(\theta')$ generated by a report profile $\theta' \in \Theta$ is shown in Figure~\ref{1110}(1). First, run Algorithm 1 and obtain $\delta(S)$ for all $S\subseteq V$. Especially, we have $g(\theta')=\delta(V)=\{a,b,c,d\}$ and $f(\theta')=\{(s,b),(a,b),(a,c),(a,d)\}$. Then we compute each node's cost share. Taking the node $a$ for an example, we have $\delta(g(\theta')\setminus\{a\})=\{b\}$ and by Equation~(\ref{critical}), $x_a(\theta')=v_b'-c_{(s,b)}-(v_b'+v_c'+v_d'-c_{(s,b)}-c_{(a,b)}-c_{(a,c)}-c_{(a,d)})=9-7-(9+6+7-7-8-6-5)=6$. Similarly, we have $x_b(\theta')=5, x_c(\theta')=6$ and $x_d(\theta')=5$. Thus we have $x(\theta')=(6,5,6,5)$.
\end{example}

\subsection{Properties of CVM}
\label{pro}
Now we show some nice properties of the critical value based mechanism. 

\begin{theorem}
	\label{thethe3}
	The critical value based mechanism satisfies truthfulness. 
\end{theorem}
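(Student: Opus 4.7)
The plan is to adapt the classical VCG truthfulness proof to this setting, exploiting the submodularity of the minimum Steiner tree cost function. I would first substitute the payment~(\ref{critical}) into the utility $u_i=v_i-CV_i$ to obtain, when $i\in g(\theta')$,
\[
u_i(\theta')=\widetilde{SW}(g(\theta'))-\widetilde{SW}(\delta(g(\theta')\setminus\{i\})),
\]
where $\widetilde{SW}(S)=\sum_{j\in S}v_j-C^{E(\theta')}(S)$ uses $i$'s \emph{true} valuation and the edge set induced by the reports. The key algebraic point is that the reported $v_i'$ cancels: it appears in $SW_{\theta'}(g(\theta'))$ as a single additive term and does not appear in $SW_{\theta'}(\delta(g(\theta')\setminus\{i\}))$, so it cancels against the $v_i'$ explicitly written in the payment, leaving the true $v_i$ in its place.

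Next I would handle valuation-only deviations ($e_i''=e_i$, $v_i''\neq v_i$). The edge set is unchanged, so $\widetilde{SW}$ equals the true social welfare on the full graph, and the question reduces to a single-parameter Myerson-style argument. The central step is a supermodularity lemma: since the minimum Steiner tree cost $C(\cdot)$ is submodular, the social welfare satisfies $SW(A\cup B)+SW(A\cap B)\geq SW(A)+SW(B)$ for all $A,B\subseteq V$. Instantiated with $A=g=\delta(V)$ and $B=\delta(V\setminus\{i\})$ and combined with the optimality of $g$, this forces $\delta(V\setminus\{i\})\subseteq g\setminus\{i\}$ (under consistent tiebreaking), so $\delta(g\setminus\{i\})=\delta(V\setminus\{i\})$ and the payment $CV_i$ is the classical Myerson critical value. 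Combined with monotonicity of the allocation in $v_i'$ (raising $v_i'$ only raises $SW$ of sets containing $i$), this yields truthfulness in the valuation coordinate.

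For edge-cutting deviations ($v_i''=v_i$, $e_i''\subsetneq e_i$), cutting edges shrinks $E(\theta'')$, so $C^{E(\theta'')}(S)\geq C(S)$ and $\widetilde{SW}(S)\leq SW(S)$ for every $S$. I would then argue that the marginal contribution $\widetilde{SW}(g(\theta''))-\widetilde{SW}(\delta(g(\theta'')\setminus\{i\}))$ is at most the truthful $u_i$: intuitively, shrinking the graph removes the very Steiner-tree options that make $i$ uniquely valuable and so weakly decreases $i$'s marginal surplus. Combining the two cases via the composition $u_i(v_i,e_i)\geq u_i(v_i,e_i'')\geq u_i(v_i'',e_i'')$ yields full truthfulness.

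The main obstacle is the supermodularity lemma that establishes $\delta(V\setminus\{i\})\subseteq g\setminus\{i\}$ together with a careful tiebreaking convention, which is needed so that the formula $\delta(g\setminus\{i\})$ genuinely coincides with the standard VCG reference $\delta(V\setminus\{i\})$; without this identification the mechanism would under-charge $i$ and fail to prevent low-valuation agents from posing as high-valuation ones. The edge-cutting piece is comparatively clean once one observes that the minimum Steiner tree cost is monotone under edge removal.
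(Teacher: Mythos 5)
Your VCG-style rewriting of the utility as $u_i=\widetilde{SW}(g(\theta'))-SW(\delta(g(\theta')\setminus\{i\}))$ is correct and is a cleaner starting point than the paper's direct case analysis, and you have rightly spotted the real subtlety: the pivot term in Equation~(\ref{critical}) is computed over $g(\theta')\setminus\{i\}$ rather than $V\setminus\{i\}$, so one must argue it does not depend on $i$'s report. However, the lemma you lean on to close this gap is based on a false premise. The minimum Steiner tree cost $C(\cdot)$ is \emph{not} submodular in general; minimum spanning/Steiner tree games are well known to fail concavity (this is precisely why the Kar solution, the Shapley value of this game, is not a core selection, and why the folk solution has to pass to the irreducible cost matrix). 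Consequently $SW$ need not be supermodular, the inclusion $\delta(V\setminus\{i\})\subseteq g(\theta')\setminus\{i\}$ does not follow, and the identification of $CV_i$ with the classical Myerson critical value is unproven. Since you yourself flag this lemma as the load-bearing step, the argument as proposed does not go through; the paper instead avoids the identification entirely and works directly with the formula for $CV_i$, comparing $CV_i(\theta')$ and $CV_i(\theta'')$ through the optimality of $g(\theta')$.

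The edge-cutting case also needs more than the stated intuition. When $i$ cuts adjacent edges, \emph{both} terms of the marginal surplus can move: $C(S)$ increases not only for sets containing $i$ but also for sets $S\not\ni i$, because $i$ may serve as a Steiner point on other nodes' trees (the paper explicitly allows trees to route through non-selected nodes). So "shrinking the graph weakly decreases $i$'s marginal surplus" is not immediate; one needs the explicit chain of inequalities (as in the paper's proof) combining (i) the optimality of $g(\theta')$ over all subsets under the true edge set and (ii) the monotonicity $C'(S)\geq C(S)$ under edge removal, to conclude $CV_i(\theta')\leq CV_i(\theta'')$. Your composition step $u_i(v_i,e_i)\geq u_i(v_i,e_i'')\geq u_i(v_i'',e_i'')$ is the right way to combine the two deviations, but each link must first be established.
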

\begin{proof}
First, we prove that each node $i \in V$ will report its valuation truthfully (i.e., $v_i'=v_i$). When $i$ truthfully reports its valuation, there are two cases.
\begin{itemize}
    \item $i \notin g(\theta')$. We have $u_i(\theta')=0$. If node $i$ reports $v_i'< v_i$, it is still not selected and the utility does not change. Otherwise ($v_i' > v_i$), there are two possibilities.
    \begin{itemize}
        \item It is still not selected and the utility does not change.
        \item It is selected. By the proposed mechanism, since $i$'s critical value is larger than $v_i$, its utility is negative. Thus the utility decreases.
    \end{itemize}
    \item $i \in g(\theta')$. We have $u_i(\theta') \geq 0$. If node $i$ reports $v_i'>v_i$, it is still selected and the utility does not change. Otherwise ($v_i' \leq v_i$), there are two possibilities.
    \begin{itemize}
        \item It is still selected. The utility does not change.
        \item It is not selected and its utility is 0. So the utility decreases.
    \end{itemize}
\end{itemize}

Second, we prove that each node $i \in V$ will report its adjacent edges truthfully (i.e., $e_i'=e_i$). When $i$ truthfully reports its adjacent edges, there are two cases. 
\begin{itemize}
    \item $i \notin g(\theta')$. We have $u_i(\theta')=0$. If node $i$ reports $e_i' \neq e_i$, its cost share will weakly increase and thus it is still not selected. Hence, the utility does not change.
    \item $i \in g(\theta')$. We have $u_i(\theta')\geq 0$. If node $i$ reports $e_i' \neq e_i$, there are two possibilities. 
    \begin{itemize}
        \item It is not selected. Obviously, its utility decreases.
        \item It is still selected. For simplicity, let $S_1=g(\theta'), S_2=g(\theta''), S_3=S_1\setminus \delta(S_1\setminus\{i\}), S_4=S_2\setminus \delta(S_2\setminus\{i\})$ where $\theta''=((e_i',v_i'),\theta'_{-i})$ and $\theta'=((e_i,v_i'),\theta'_{-i})$. Then by Equation~(\ref{critical}) we have 
        \begin{align*}
        CV_i(\theta') &= C(S_1)-C(\delta(S_1\setminus\{i\}))-\sum_{j\in S_3\setminus\{i\}}v_j', \\
        CV_i(\theta'') &=C'(S_2)-C'(\delta(S_2\setminus\{i\}))-\sum_{j\in S_4\setminus\{i\}}v_j',
        \end{align*}
        where $C'(\cdot)$ denotes the value function when $i$ misreports $e_i'$. Since $S_1$ maximizes the social welfare under $\theta'$, we have
        \begin{equation*}
            \begin{split}
        \sum_{j\in S_3}v_j' -(C(S_1)-C(\delta(S_1\setminus\{i\}))) &\\
        \geq \sum_{j\in S_4}v_j' -(C(S_2)-C(\delta(S_2\setminus\{i\}))).
        \end{split}
        \end{equation*}
        For the set $S_4$, the increment of SW will decrease since the set of available edges of $S_4$ is reduced. Then we have
        \begin{equation*}
            \begin{split}
        \sum_{j\in S_4}v_j' -(C(S_2)-C(\delta(S_2\setminus\{i\}))) 
        &\\\geq 
        \sum_{j\in S_4}v_j' -(C'(S_2)-C'(\delta(S_2\setminus\{i\}))).
            \end{split}
        \end{equation*}
        Therefore, 
        \begin{equation*}
            \begin{split}
                 \sum_{j\in S_3}v_j' -(C(S_1)-C(\delta(S_1\setminus\{i\})))
                 &\\
                 \geq \sum_{j\in S_4}v_j' -(C'(S_2)-C'(\delta(S_2\setminus\{i\}))).
            \end{split}
        \end{equation*}
        Eliminating $v_i'$, we have $CV_i(\theta') \leq CV_i(\theta'')$.
    \end{itemize}
\end{itemize}
So the cost share of $i$ weakly increases when misreporting its adjacent edges. Therefore, $i$'s utility weakly decreases when misreporting its adjacent edges.
\end{proof}

\begin{theorem}
\label{effi}
The critical value based mechanism satisfies feasibility. 
\end{theorem}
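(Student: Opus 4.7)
The plan is to reduce the feasibility claim to a single inequality that follows directly from the optimality property built into the node-selection rule. Specifically, $g(\theta') = \delta(V)$ is chosen by Algorithm~1 to maximize social welfare over \emph{all} subsets of $V$. Since $\delta(g(\theta') \setminus \{i\})$ is itself a subset of $V$, it is a legitimate competitor to $g(\theta')$ in the maximization, so $SW(g(\theta')) \geq SW(\delta(g(\theta') \setminus \{i\}))$ for every $i \in V$. Rearranging this inequality around $v_i'$ will produce exactly $v_i' \geq CV_i(\theta')$, which is what feasibility requires.

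First I would split into two cases. If $i \notin g(\theta')$, then $x_i(\theta') = 0$ by definition of the cost sharing policy, and since $v_i' \geq 0$ by assumption, feasibility holds trivially. If $i \in g(\theta')$, I would begin by writing out the optimality inequality
\begin{equation*}
\sum_{k \in g(\theta')} v_k' - C(g(\theta')) \;\geq\; \sum_{j \in \delta(g(\theta') \setminus \{i\})} v_j' - C(\delta(g(\theta') \setminus \{i\})),
\end{equation*}
then isolate the term $v_i'$ on the left-hand side by moving $\sum_{k \in g(\theta') \setminus \{i\}} v_k'$ to the right and $C(g(\theta'))$ to the right with appropriate sign. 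What remains on the right-hand side is exactly the expression $CV_i(\theta')$ given in Equation~(\ref{critical}). Concluding $v_i' \geq CV_i(\theta') = x_i(\theta')$ completes this case.

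There is no real obstacle here; the proof is a one-line rearrangement once the optimality of $g(\theta')$ over subsets of $V$ is invoked. The only subtlety worth flagging is that one must verify $\delta(g(\theta') \setminus \{i\})$ is indeed a subset of $V$ (it is, since $g(\theta') \setminus \{i\} \subseteq V$ and Algorithm~1 only ever returns subsets of its input), so that it competes against $g(\theta') = \delta(V)$ in the global maximum. This is the same structural observation that makes VCG payments individually rational, and it is the key fact to articulate clearly in the write-up.
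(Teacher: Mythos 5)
Your proof is correct and takes essentially the same route as the paper: the paper's (much terser) argument also rests on the observation that the critical value is the minimum reported valuation keeping $i$ selected, which is precisely your rearrangement of $SW(g(\theta')) \geq SW(\delta(g(\theta')\setminus\{i\}))$ into $v_i' \geq CV_i(\theta')$. Your version simply makes the algebra explicit where the paper leaves it implicit.
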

\begin{proof}
According to the mechanism and Algorithm 1, the participation of each selected node can increase social welfare. Because the critical value is the minimum reported valuation that keeps the node being selected, its cost share is less than or equal to its reported valuation. For the other nodes, their cost share is 0, which is less than their reported valuation.      
\end{proof}

\begin{theorem}
\label{}
The critical value based mechanism satisfies individual rationality.
\end{theorem}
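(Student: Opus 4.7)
The plan is to establish IR by a case split on whether node $i$ is selected by the mechanism. If $i \notin g(\theta')$, then by the definition of the cost sharing policy $x_i(\theta')=0$ and hence $u_i(\theta')=0\geq 0$; the claim is immediate in this case. So the whole content of the proof lies in the case $i\in g(\theta')$, where the truthful report $v_i'=v_i$ gives $u_i(\theta')=v_i-CV_i(\theta')$, and I need to show $v_i\geq CV_i(\theta')$.

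For the nontrivial case, the key step is to invoke the efficiency guarantee of Algorithm 1: since $g(\theta')=\delta(V)$ maximizes social welfare over all subsets of $V$, and $\delta(g(\theta')\setminus\{i\})$ is one such subset, we have
\begin{equation*}
\sum_{k\in g(\theta')} v_k' - C(g(\theta'))\;\geq\;\sum_{j\in \delta(g(\theta')\setminus\{i\})} v_j' - C(\delta(g(\theta')\setminus\{i\})).
\end{equation*}
I would then pull the $v_i'$ term out of the left-hand sum and rearrange, which yields exactly
\begin{equation*}
v_i'\;\geq\;\Bigl(\!\sum_{j\in \delta(g(\theta')\setminus\{i\})}\!\! v_j' - C(\delta(g(\theta')\setminus\{i\}))\Bigr)-\Bigl(\!\sum_{k\in g(\theta')\setminus\{i\}}\!\! v_k' - C(g(\theta'))\Bigr)=CV_i(\theta'),
\end{equation*}
matching Equation~(\ref{critical}). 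Substituting the truthful report $v_i'=v_i$ and recalling $x_i(\theta')=CV_i(\theta')$ gives $u_i(\theta')=v_i-x_i(\theta')\geq 0$, as required.

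Two points I would note for rigor. First, the definition of IR quantifies over arbitrary $\theta'_{-i}$; the argument above uses only the efficiency of the selection rule applied to the report profile $(\theta_i,\theta'_{-i})$, which holds regardless of what the other agents report, so the conclusion is genuinely uniform in $\theta'_{-i}$. Second, the critical value formula compares $g(\theta')$ against $\delta(g(\theta')\setminus\{i\})$ rather than $\delta(V\setminus\{i\})$; using the former is exactly what makes the efficiency inequality applicable, so one must be careful to invoke $SW(g(\theta'))\geq SW(\delta(g(\theta')\setminus\{i\}))$ and not a more aggressive comparison. Apart from this bookkeeping, the proof is a direct algebraic unwinding of the definitions, so I do not anticipate any real obstacle beyond cleanly writing the two displays above.
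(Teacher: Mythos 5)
Your proof is correct and follows essentially the same route as the paper: the paper handles the unselected case identically and, for selected nodes, simply cites its feasibility theorem ($x_i(\theta')\leq v_i'$) together with truthful reporting. Your only difference is that you prove the inequality $CV_i(\theta')\leq v_i'$ inline by explicitly unwinding $SW(g(\theta'))\geq SW(\delta(g(\theta')\setminus\{i\}))$, which is exactly the argument underlying the paper's feasibility proof, so your version is just a more self-contained rendering of the same idea.
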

\begin{proof}
Given a report profile $\theta' \in \Theta$, for each node $i \in V\setminus g(\theta')$, we have $u_i(\theta')=0$. For each node $i \in g(\theta')$, by Theorem~\ref{effi}, we have $x_i(\theta')\leq v_i'$. According to Theorem~\ref{thethe3}, $v_i'=v_i$. Hence, $u_i(\theta')=v_i-x_i(\theta')\geq 0$.
\end{proof}

\begin{theorem}
\label{core}
The critical value based mechanism satisfies efficiency.
\end{theorem}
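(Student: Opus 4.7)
The claim is essentially built into the construction of CVM, so the proof plan is short but must address one subtle point: the definition of efficiency is stated with respect to the reported profile $\theta'$, while social welfare in the intended sense is defined using true valuations. Thus I would first appeal to the truthfulness theorem to conclude that every agent reports $\theta_i' = \theta_i$, so the reported and true SW coincide, and it suffices to show that $g(\theta')$ maximizes $SW(S) = \sum_{i \in S} v_i' - C(S)$ over $S \subseteq V$.

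By Step 1 of CVM, $g(\theta') = \delta(V)$, where $\delta(\cdot)$ is computed by Algorithm 1. The plan is therefore to prove, by induction on $|S|$, that Algorithm 1 returns $\delta(S) \in \arg\max_{S' \subseteq S} SW(S')$. The base case $|S| = 0$ is immediate since $\delta(\emptyset) = \emptyset$ and $SW(\emptyset) = 0$. For the inductive step, fix $S$ with $|S| = k \geq 1$. Any $T \subseteq S$ is either $T = S$ itself, or else $T \subsetneq S$, in which case $T \subseteq S'$ for some $S' \subset S$ with $|S'| = k - 1$. By the inductive hypothesis, the quantity $SW(\delta(S'))$ already equals $\max_{T \subseteq S'} SW(T)$, so
\[
\max_{T \subsetneq S} SW(T) \;=\; \max_{S' \in Q(S)} SW(\delta(S')).
\]
The algorithm then compares this value to $SW(S) = \sum_{i \in S} v_i' - C(S)$ and picks whichever is larger, which is exactly $\max_{T \subseteq S} SW(T)$. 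Applying this with $S = V$ gives $g(\theta') = \delta(V) \in \arg\max_{S \subseteq V} SW(S)$, which is the definition of efficiency.

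I do not expect a real obstacle here; the only thing to be careful about is (i) invoking truthfulness before speaking about SW in the welfare sense, and (ii) verifying that the induction covers every proper subset $T \subsetneq S$ via some cardinality-$(k-1)$ superset in $Q(S)$, which is immediate because $T \subseteq S \setminus \{i\}$ for any $i \in S \setminus T$. The correctness of Algorithm 1 then plugs directly into the statement.
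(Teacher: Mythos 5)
Your proposal is correct and follows the same route as the paper: efficiency holds by construction, since Step~1 of CVM sets $g(\theta')=\delta(V)$ and Algorithm~1 computes a welfare-maximizing subset. The paper's own proof is a one-line assertion that this is obvious, whereas you additionally supply the induction establishing Algorithm~1's correctness (and note that the formal definition of efficiency is already stated in terms of reported valuations $v_i'$, so the appeal to truthfulness is not strictly needed); this fills in detail the paper leaves implicit rather than taking a different approach.
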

\begin{proof}
According to the mechanism, it is obvious that the set of selected nodes can maximize the social welfare.
\end{proof}

\begin{theorem}
\label{sym}
The critical value based mechanism satisfies positiveness.  
\end{theorem}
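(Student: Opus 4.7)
The plan is to split the argument by whether node $i$ is selected by the mechanism. For $i \notin g(\theta')$, Definition~\ref{def2} sets $x_i(\theta')=0$, so positiveness is immediate. The substantive case is $i \in g(\theta')$, where $x_i(\theta')=CV_i(\theta')$ as defined in Equation~(\ref{critical}); I need to show $CV_i(\theta') \geq 0$.

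Write $T=g(\theta')$ and $S=\delta(T\setminus\{i\})$. My plan is first to rewrite the subtracted term in Equation~(\ref{critical}) by inserting $\pm C(T\setminus\{i\})$, which gives
\[
\sum_{k \in T\setminus\{i\}} v_k' - C(T) \;=\; SW(T\setminus\{i\}) \;+\; (C(T\setminus\{i\}) - C(T)),
\]
and hence
\[
CV_i(\theta') \;=\; SW(S) - SW(T\setminus\{i\}) \;+\; (C(T) - C(T\setminus\{i\})).
\]
It then suffices to verify that both pieces on the right are non-negative.

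For the first piece, I would invoke the inductive construction in Algorithm~1: $S = \delta(T\setminus\{i\})$ achieves the maximum social welfare among all subsets of $T\setminus\{i\}$, and since $T\setminus\{i\}$ is itself a candidate subset, we have $SW(S) \geq SW(T\setminus\{i\})$. For the second piece, I would note that any tree witnessing $C(T)$ must contain every node of $T \cup \{s\}$ and is therefore also a feasible Steiner tree for $(T\setminus\{i\}) \cup \{s\}$; minimising over the larger feasible set gives $C(T\setminus\{i\}) \leq C(T)$. Combining these two inequalities yields $CV_i(\theta') \geq 0$, completing the proof.

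I do not anticipate a genuine obstacle here: the only slightly delicate point is the monotonicity of the minimum Steiner tree cost in the terminal set, which follows at once from the feasibility inclusion above, and the optimality of $\delta$ on subsets is built into Algorithm~1. The argument is essentially bookkeeping, organised around the decomposition of $CV_i(\theta')$ displayed above.
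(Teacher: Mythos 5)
Your proof is correct and rests on exactly the same two facts as the paper's own argument --- the optimality of $\delta(g(\theta')\setminus\{i\})$ among subsets of $g(\theta')\setminus\{i\}$ and the monotonicity $C(g(\theta')\setminus\{i\})\le C(g(\theta'))$ --- so it is essentially the same approach, merely organised as a direct decomposition of $CV_i(\theta')$ into two non-negative terms rather than the paper's proof by contradiction. If anything your version is tidier: it sidesteps the paper's slight imprecision of assuming $x_i(\theta')\le 0$ where the contradiction with the maximality of $\delta$ really requires strict negativity.
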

\begin{proof}
We prove the statement by contradiction. According to the mechanism and Equation~(\ref{critical}), we have
\begin{equation*}
		\begin{split}
		x_i(\theta')
		    =CV_i(\theta')    
		  = (C(g(\theta'))-C(\delta(g(\theta')\setminus \{i\})))
		 -\Delta.\\
		\end{split}
		\end{equation*}

where $$\Delta =\sum_{k \in g(\theta')\setminus \{i\}}v_k'-\sum_{j \in \delta(g(\theta')\setminus \{i\})}v_j' $$

If $x_i(\theta') \leq 0$, then we have $$C(g(\theta'))-C(\delta(g(\theta')\setminus \{i\}))\leq \Delta$$
Since $$C(g(\theta') \setminus \{i
\}) \leq C(g(\theta')),$$ we have $$C(g(\theta') \setminus \{i
\})-C(\delta(g(\theta')\setminus \{i\}))\leq \Delta$$ Therefore, the nodes in $(g(\theta')\setminus \{i\}-\delta(g(\theta')\setminus \{i\}))$ can be selected by the mechanism. By the definition of $\delta(\cdot)$, they cannot be selected by the mechanism. This leads to a contradiction.  
\end{proof}

\begin{theorem}
\label{symmetry}
The critical value based mechanism satisfies symmetry.
\end{theorem}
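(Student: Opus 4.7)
The plan is to exploit the fact that the hypotheses make the entire problem instance invariant under the involution $\sigma$ that transposes the labels $i$ and $j$ (and fixes every other node). Since $v_i' = v_j'$ and $c_{(i,k)} = c_{(j,k)}$ for every common neighbour $k$, both the weighted edge set and the reported valuation profile are preserved by $\sigma$, so $C(\sigma(S)) = C(S)$ and $\sum_{k \in \sigma(S)} v_k' = \sum_{k \in S} v_k'$ for every $S \subseteq V$, and consequently $SW(\sigma(S)) = SW(S)$.

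First I would split into cases based on $|g(\theta') \cap \{i,j\}|$. If neither $i$ nor $j$ is selected, then $u_i(\theta') = u_j(\theta') = 0$ and the claim is immediate. If both are selected, I would directly compare $CV_i(\theta')$ with $CV_j(\theta')$ using Equation~(\ref{critical}). The key observation is that $\sigma$ maps $g(\theta') \setminus \{i\}$ onto $g(\theta') \setminus \{j\}$, and under a $\sigma$-equivariant tie-breaking rule in Algorithm~1 we get $\sigma(\delta(g(\theta') \setminus \{i\})) = \delta(g(\theta') \setminus \{j\})$. This yields
\[
\sum_{k \in \delta(g(\theta') \setminus \{i\})} v_k' = \sum_{k \in \delta(g(\theta') \setminus \{j\})} v_k'
\quad \text{and} \quad
C(\delta(g(\theta') \setminus \{i\})) = C(\delta(g(\theta') \setminus \{j\})),
\]
and combined with $\sum_{k \in g(\theta') \setminus \{i\}} v_k' - \sum_{k \in g(\theta') \setminus \{j\}} v_k' = v_j' - v_i' = 0$, substituting into Equation~(\ref{critical}) gives $CV_i(\theta') = CV_j(\theta')$. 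Since $v_i = v_i' = v_j' = v_j$, this forces $u_i(\theta') = u_j(\theta')$.

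The remaining case $|g(\theta') \cap \{i,j\}| = 1$ is where the symmetry of the mechanism is most at stake. If, say, $i \in g(\theta')$ but $j \notin g(\theta')$, then $\sigma(g(\theta'))$ is an equally optimal set containing $j$ but not $i$, so $\delta(V)$ was never uniquely determined in the first place. I would handle this by requiring Algorithm~1 to resolve its $\arg\max$ steps in a $\sigma$-equivariant way (for instance by choosing the optimal set that minimises a label-independent canonical key), which forces either both of $i,j$ to be selected or neither, reducing back to one of the previous subcases.

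The principal obstacle is precisely this tie-breaking question: the symmetry claim is morally a consequence of the $\sigma$-invariance of the instance, but without stipulating that Algorithm~1 breaks ties symmetrically one can easily cook up an implementation that selects $i$ over $j$ by raw label precedence and thereby violates the conclusion. Modulo this mild and standard assumption on tie-breaking, everything else reduces to the symmetric arithmetic check on Equation~(\ref{critical}) outlined above.
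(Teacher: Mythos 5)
Your proof follows the same skeleton as the paper's (a case split on which of $i,j$ lie in $g(\theta')$, then a comparison of critical values via Equation~(\ref{critical})), but it is substantially more careful. The paper's own proof consists of two cases only: if neither node is selected both utilities are zero, and if both are selected it simply asserts ``by the condition of symmetry, $x_i(\theta')=x_j(\theta')$'' without the explicit argument; your $\sigma$-equivariance computation showing $C(\delta(g(\theta')\setminus\{i\}))=C(\delta(g(\theta')\setminus\{j\}))$ and the matching valuation sums is exactly the justification the paper leaves implicit. More importantly, you identify a genuine gap that the paper passes over in silence: the case where exactly one of $i,j$ is selected, which can only arise from how Algorithm~1 breaks ties among subsets of equal social welfare. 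Your observation that a label-dependent tie-breaking rule could select $i$ but not $j$ and thereby violate symmetry is correct, and your fix (requiring the $\arg\max$ in Algorithm~1 to be resolved in a $\sigma$-equivariant, label-independent way) is the standard and appropriate repair. In short, your proof is correct modulo that explicitly stated tie-breaking assumption, and it is strictly more complete than the argument in the paper.
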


\begin{proof}
We need to show that, given $\theta' \in \Theta$ and $i,j \in V$ with $r_i(\theta') \setminus \{j\} = r_j(\theta') \setminus \{i\} $ and $v_i=v_j$, $c_{(i,k)} = c_{(j,k)}$ ($\forall k \in r_i(\theta') \setminus \{j\}$) implies $u_i(\theta') = u_j(\theta')$. If $i,j\notin g(\theta')$, we have $u_i(\theta')=u_j(\theta')=0$. If $i,j\in g(\theta')$, by the condition of symmetry, we have $x_i(\theta')=x_j(\theta')$. Since $v_i=v_j$, we have $u_i(\theta')=u_j(\theta')$.
\end{proof}

\begin{theorem}
\label{cm}
The critical value based mechanism satisfies utility monotonicity.
\end{theorem}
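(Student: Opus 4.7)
I plan to prove utility monotonicity by case analysis on whether node $i$ remains in the selected set after the cost of edge $(i,j)$ is increased. Let $\theta'^+$ denote the post-increase report profile, $g^+ = g(\theta'^+)$, and $SW^+$, $C^+$ denote social welfare and Steiner cost evaluated under the new edge cost.

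In the first case, $i \in g(\theta')$ but $i \notin g^+$; then $u_i^+(\theta') = 0$ by definition, and the previously established individual rationality gives $u_i(\theta') \geq 0 = u_i^+(\theta')$, so UM holds immediately. The nontrivial case is when $i$ is selected both before and after. Since $v_i$ is unchanged, $u_i(\theta') \geq u_i^+(\theta')$ is equivalent to $CV_i^+(\theta') \geq CV_i(\theta')$. Substituting Equation~(\ref{critical}), this reduces to
\[
SW(g(\theta')) - SW^+(g^+) \;\geq\; SW(\delta(g(\theta')\setminus\{i\})) - SW^+(\delta^+(g^+\setminus\{i\})),
\]
i.e., the drop in the overall optimal social welfare is at least the drop in the optimal social welfare taken over subsets excluding $i$ within the corresponding selected sets.

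The crucial step is a structural claim about minimum Steiner trees: raising $c_{(i,j)}$ affects the minimum Steiner tree cost for any terminal set containing $i$ at least as much as for a subset not containing $i$. Concretely, if every minimum Steiner tree for $\delta(g(\theta')\setminus\{i\}) \cup \{s\}$ uses $(i,j)$, then every minimum Steiner tree for $g(\theta') \cup \{s\}$ uses it as well. The intuition is that $g(\theta')$ already contains $i$ as a required terminal and possibly extra terminals that can share the $(i,j)$ detour, so any saving from $(i,j)$ profitable for the smaller set $B$ is at least as profitable for the larger set $g(\theta')$.

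Granting this structural claim, the change in each side of the reduced inequality is either $0$ or the cost increment $\Delta$, and the claim ensures that whenever the right-hand side increases by $\Delta$, the left-hand side does too. The main obstacle is proving the structural claim rigorously. The natural route is by contradiction: suppose $g$'s optimal tree avoids $(i,j)$ while $B$'s optimal tree strictly prefers it. Restricting $g$'s tree to $B \cup \{s\}$ gives a $B$-tree avoiding $(i,j)$ with cost at most $C(g)$; combined with the strict $(i,j)$-preference of $B$'s optimum, this should allow constructing a cheaper $(i,j)$-using tree for $g$ by splicing in the extensions that $g$ requires beyond $B \cup \{i\}$, contradicting the optimality of the chosen $g$-tree.
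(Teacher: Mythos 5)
Your case split and the reduction of the nontrivial case to $CV_i^{+}(\theta')\geq CV_i(\theta')$, i.e.\ to
\[
SW(g(\theta'))-SW^{+}(g^{+})\;\geq\;SW(\delta(g(\theta')\setminus\{i\}))-SW^{+}(\delta^{+}(g^{+}\setminus\{i\})),
\]
match what is needed (though you skip the case $i\notin g(\theta')$, which the definition of UM also covers and the paper disposes of separately). The gap is in how you close this inequality. First, the dichotomy ``the change in each side is either $0$ or the cost increment $\Delta$'' is false for a finite increment: when $c_{(i,j)}$ rises by $\Delta$, the minimum Steiner cost of a terminal set rises by $\min(\Delta,\sigma)$, where $\sigma$ is the slack between the cheapest $(i,j)$-avoiding tree and the optimum, so the increase can be anything in $[0,\Delta]$. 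You would therefore have to compare these slacks across the two terminal sets, not merely whether every optimal tree uses $(i,j)$; your structural claim, even granted, does not imply the displayed inequality. Second, the right-hand side is not the drop of one fixed max: $\delta(g(\theta')\setminus\{i\})$ and $\delta^{+}(g^{+}\setminus\{i\})$ are argmaxes over \emph{different} ground sets, because $g^{+}$ itself can differ from $g(\theta')$; your sketch never confronts this re-optimization. Third, the structural claim is itself dubious: enlarging the terminal set can make $(i,j)$ \emph{less} essential, since the added terminals may subsidize an alternative subtree that bypasses $(i,j)$, so the larger instance can acquire an optimal tree avoiding the edge even when the smaller instance's optimum uses it uniquely. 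The splicing step in your contradiction argument is precisely where this breaks, as restricting an optimal tree for $g(\theta')\cup\{s\}$ and re-attaching terminals need not produce an optimal tree.

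The paper takes a different and shorter route for this case: it writes the critical value as $CV_i(\theta')=C(S_1)-C(\delta(S_1\setminus\{i\}))-\sum_{j\in S_3\setminus\{i\}}v_j'$ and reuses the chain of inequalities from the second part of the proof of Theorem~\ref{thethe3}, where cutting an edge plays the role of raising its cost to infinity; the key step there is that the ``increment of social welfare'' term $\sum_{j\in S}v_j'-(C(S)-C(\delta(S\setminus\{i\})))$ only weakly decreases when the edge becomes more expensive. If you want a self-contained proof, you should follow that template (and in particular justify that monotonicity of the increment) rather than route the argument through a which-trees-use-the-edge claim.
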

\begin{proof}
For any node $i \in V$, given $\theta' \in \Theta$, $j \in V$ such that $(i,j) \in E$, we use $g^+(\theta')$ to denote the set of selected nodes when $c_{(i,j)}$ increases.

If $i \notin g(\theta')$, then $u_i(\theta')=0$, and $i \notin g^+(\theta')$ according to CVM. So its utility does not change.

If $i\in g(\theta')$, then $u_i(\theta')\geq 0$. There are two cases.
    \begin{itemize}
        \item $i\notin g^+(\theta')$. Then its utility weakly decreases.
        \item $i\in g^+(\theta')$. Its cost share becomes $$(\sum_{j \in \delta(g^+(\theta')\setminus \{i\})}v_j-C(\delta(g^+(\theta')\setminus \{i\})))-(\sum_{k \in g^+(\theta')\setminus \{i\}}v_k-C(g^+(\theta'))).$$ By the similar analysis to the second part in the proof of Theorem~\ref{thethe3}, we know the utility of $i$ weakly decreases.
    \end{itemize}
\end{proof}

\begin{theorem}
The critical value based mechanism satisfies ranking.
\end{theorem}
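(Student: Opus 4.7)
The plan is a case analysis based on whether $i$ and $j$ lie in $g(\theta')$, with the bulk of the work concentrated in the case when both are selected. By Theorem~\ref{thethe3} we may substitute $v_i = v_i'$ and $v_j = v_j'$ throughout, and in particular $v_i = v_j$. If neither node is in $g(\theta')$ both utilities are zero, and if $i \in g(\theta')$ but $j \notin g(\theta')$ then individual rationality gives $u_i \geq 0 = u_j$. For the mirror case $j \in g(\theta')$, $i \notin g(\theta')$, I would argue via the swap set $S' = (g(\theta') \setminus \{j\}) \cup \{i\}$: the same Steiner-tree edge swap sketched below shows $SW(S') \geq SW(g(\theta'))$, so under any tie-breaking in Algorithm 1 that consistently prefers the cheaper-edged node $i$ over $j$ at equal social welfare this case reduces to the previous one, and otherwise $SW(S') = SW(g(\theta'))$ would contradict efficiency unless equality holds in every edge-cost inequality, in which case symmetry forces $u_i = u_j$.

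The main case is $i, j \in g(\theta')$. Plugging $v_i' = v_j'$ into Equation~(\ref{critical}) and cancelling the common $SW(g(\theta'))$ term gives
\begin{equation*}
u_i(\theta') - u_j(\theta') = CV_j(\theta') - CV_i(\theta') = SW(\delta(g(\theta')\setminus\{j\})) - SW(\delta(g(\theta')\setminus\{i\})),
\end{equation*}
so it suffices to exhibit, for $S_i^\star := \delta(g(\theta') \setminus \{i\})$, a candidate set $S' \subseteq g(\theta') \setminus \{j\}$ with $SW(S') \geq SW(S_i^\star)$. If $j \notin S_i^\star$ we simply take $S' = S_i^\star$. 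If $j \in S_i^\star$, define $S' = (S_i^\star \setminus \{j\}) \cup \{i\}$, which lies in $g(\theta') \setminus \{j\}$ since $i \in g(\theta')$ and $i \neq j$. The valuation sum is unchanged because $v_i' = v_j'$, so everything reduces to showing $C(S') \leq C(S_i^\star)$.

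For the cost inequality I would take a minimum Steiner tree $T$ of $S_i^\star \cup \{s\}$ and modify it by replacing $j$ with $i$: for every tree-neighbour $p$ of $j$ in $T$ with $p \neq i$, substitute the edge $(j,p)$ by $(i,p)$, which exists in $E(\theta')$ by the shared-neighbourhood hypothesis $r_i(\theta') \setminus \{j\} = r_j(\theta') \setminus \{i\}$ and costs at most $c_{(j,p)}$ by assumption; if the edge $(i,j)$ already appears in $T$, drop it. The resulting subgraph spans $S' \cup \{s\}$, and any spanning tree of it witnesses $C(S') \leq C(S_i^\star)$, completing the argument.

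The main obstacle I anticipate is making the Steiner-tree edge swap genuinely rigorous when $i$ already appears inside $T$ as a Steiner point: naively rerouting $j$'s tree edges through $i$ can produce a cycle or a multigraph rather than a tree. The cleanest fix is to perform the edge substitutions, observe that the resulting graph on vertex set $V(T) \setminus \{j\} \cup \{i\}$ is still connected and still spans $S' \cup \{s\}$, and then invoke the fact that any spanning tree of a connected graph has weight at most the sum of its edges, so the cost can only drop further. The weak-versus-strict inequality handling in Case~(iii) is a secondary subtlety but is resolved in the same manner, reducing it either to Case~(i) under favourable tie-breaking or to the symmetry theorem when all edge costs between common neighbours are equal.
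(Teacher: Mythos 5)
Your proof is correct and follows the same skeleton as the paper's: the same case split on membership in $g(\theta')$, the same cancellation in Equation~(\ref{critical}) reducing everything to comparing $SW(\delta(g(\theta')\setminus\{i\}))$ with $SW(\delta(g(\theta')\setminus\{j\}))$. The difference is in how that comparison is justified. The paper simply asserts the inequality ``by the symmetry of $i$ and $j$ in the graph, the condition of ranking, and the proof of Theorem~\ref{cm}''; you actually prove it with an explicit exchange argument --- exhibit the candidate $S'=(S_i^\star\setminus\{j\})\cup\{i\}\subseteq g(\theta')\setminus\{j\}$, swap $j$'s Steiner-tree edges onto $i$, and handle the multigraph/cycle degeneracy by passing to a spanning tree of the still-connected modified graph. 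This is the missing lemma the paper's proof needs, so your version is strictly more complete. You also treat the fourth case ($j\in g(\theta')$, $i\notin g(\theta')$), which the paper's three-case analysis silently omits; your observation that this case hinges on the tie-breaking in Algorithm~1 is accurate and exposes a real fragility in the paper's claim (under adversarial tie-breaking among equal-welfare sets, a selected $j$ could in principle earn positive utility while $i$ earns zero). One caveat you inherit from the paper rather than introduce yourself: the ranking hypothesis is stated only for neighbours $k\in r_i(\theta')\subseteq V$, so if $j$ is attached to the source $s$ in the tree $T$, the swap $(j,s)\mapsto(i,s)$ is not licensed by the stated hypothesis; both your argument and the paper's implicitly need $s$ included among the common neighbours with $c_{(i,s)}\leq c_{(j,s)}$.
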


\begin{proof}
We need to show that, given $\theta' \in \Theta$ and $i,j \in V$ with $r_i(\theta') \setminus \{j\} = r_j(\theta') \setminus \{i\} $ and $v_i=v_j$, $c_{(i,k)} \leq c_{(j,k)}$ ($\forall k \in r_i(\theta') \setminus \{j\}$) implies $u_i(\theta') \geq u_j(\theta')$. For nodes $i$ and $j$, there are three cases.
\begin{itemize}
    \item $i,j\notin g(\theta')$. We have $u_i(\theta')=u_j(\theta')=0$.
    \item $i\in g(\theta')$ but $j \notin g(\theta')$. By individual rationality, we have $u_i(\theta')\geq 0=u_j(\theta')$.
    \item $i,j\in g(\theta')$. Since $v_i=v_j$, it suffices to prove $x_i(\theta') \leq x_j(\theta')$. From Equation~(\ref{critical}), we know the last two terms of the expressions of $x_i(\theta')$ and $x_j(\theta')$ are equal. Next, we compare $\sum_{j \in \delta(g(\theta') \setminus \{i\})}v_j-C(\delta(g(\theta')\setminus \{i\}))$ with $\sum_{i \in \delta(g(\theta')\setminus \{j\})}v_i-C(\delta(g(\theta') \setminus \{j\}))$. The former represents the maximum social welfare of $g(\theta') \setminus \{i\}$ and the latter represents the maximum social welfare of $g(\theta') \setminus \{j\}$. By the symmetry of $i$ and $j$ in the graph, the condition of ranking, and the proof of Theorem~\ref{cm}, the latter is larger than or equal to the former. Therefore, we have $x_i(\theta') \leq x_j(\theta')$.   
\end{itemize}
\end{proof}

\section{Repeated Selection Mechanism}
\label{csm-lb}
The CVM defined in Section~\ref{csm-ub} satisfies truthfulness, feasibility, and efficiency but does not satisfy budget balance. In this section, we propose another cost sharing mechanism that satisfies truthfulness, feasibility, and budget balance but does not satisfy efficiency. Moreover, we show that it also satisfies other desirable properties.

We use the method of iterative optimization. In the first round (stage) of optimization, we find a subset of nodes and the minimum satisfying the constraints of feasibility and budget balance as the cost share of these nodes. In the following rounds of optimization, we consider the remaining nodes and add an extra constraint that the optimizing variable is larger than or equal to the cost share of the last round, which guarantees the truthfulness of the mechanism. The iterative process is repeated until all the nodes are considered.   



The proposed mechanism is called Repeated Selection Mechanism (RSM) formally described in the following.

\begin{framed}
\label{budget}
 \noindent\textbf{Repeated Selection Mechanism (RSM)}
 
 \noindent\rule{\textwidth}{0.5pt}
 
 \noindent\textbf{Input}: A report profile $\theta'$ and a graph $G(\theta')$
 
 \noindent\rule{\textwidth}{0.5pt}
 
 \begin{enumerate}
    \item For stage $t$ ($t = 0,1,2,\dots$), we introduce:
        \begin{itemize}
            \item $S^t$ : the set of nodes selected in stage $t$, 
            \item $M^t$ : the union of $S^0,S^1,\cdots,S^t$,
            \item $X^t$ : the cost share of every node in $S^t$,
            \item $W^t$ : the set of nodes which will not be considered after stage $t$, 
            \item $N^t$ : the set of remaining nodes after stage $t$, and
            \item $\mathcal{E}^t$: the set of edges of the minimum Steiner tree of $S^t$.
        \end{itemize}
    $\mathcal{E}$ is the set of selected edges during the process, i.e., the union of $\mathcal{E}^0,\mathcal{E}^1,\cdots,\mathcal{E}^t$.
    
    \item Stage 0: 
    
    Set
    $X^0=0$, $N^0=V$, $M^0=\emptyset$, $W^0=\emptyset$, and $\mathcal{E}=\emptyset$.
    
    \item For $t = 1, 2, \dots :$
    
    $\quad$Stage $t$:
    \begin{align*}
     Solve \quad\quad \mathop{min}\limits_{ S^t\subseteq N^{t-1} }&X^t \\
    s.t. \quad  &X^t \geq X^{t-1} \\ 
    &X^t\cdot|S^t| = C(S^t) \\
    &v_i'\geq X^t ,  \forall i\in S^t
    \end{align*}
    \begin{itemize}
        \item If there is a solution, set $x_i(\theta')=X^t$ ($\forall i\in S^t$).\\
        Then, update: $$W^t=\{i|v_i'<X^t\},$$ $$N^t=N^{t-1}\setminus \{S^t\cup W^t\},$$ $$M^t=M^{t-1}\cup S^t.$$
      Set:$$c_{(i,j)} = 0, \forall i,j \in M^t\cup \{s\}, i\neq j,$$
      $$\mathcal{E} =\mathcal{E}\cup \mathcal{E}^t.$$
      Note that "$c_{(i,j)}=0$" means that the remaining nodes can use the edge $(i,j)$ without paying cost.
    \item Else, set $g(\theta')=M^{t-1}$ and $f(\theta')=\mathcal{E}$. Break the loop.
    \end{itemize} 
 \end{enumerate}
 
 \noindent\rule{\textwidth}{0.5pt}
 
 \noindent\textbf{Output}: The selected nodes $g(\theta')$, the selected edges $f(\theta')$, the cost sharing $x(\theta')$ 
\end{framed}

A running example is given as follows.

\begin{example}
\label{ex4.1}
The graph $G(\theta')$ generated by a report profile $\theta' \in \Theta$ is shown in Figure~\ref{RSM_ex}. For stage 1, we have $X^1=3$ and $S^1=\{b\}$. Hence, $\mathcal{E}=\mathcal{E}^1=\{(s,b)\}$. For stage 2, we have $X^2=4$ and $S^2=\{a\}$. Hence, $\mathcal{E}^2=\{(a,b)\}$ and $\mathcal{E}=\mathcal{E}^1 \cup \mathcal{E}^2=\{(s,b),(a,b)\}$. For stage 3, we have $X^3=5$ and $S^3=\{c,d,e\}$. Hence, $\mathcal{E}^3=\{(a,d),(a,c),(b,e)\}$ and $\mathcal{E}=\mathcal{E}^1 \cup \mathcal{E}^2 \cup \mathcal{E}^3=\{(s,b),(a,b),(a,d),(c,d),(b,e)\}$. Since there does not exist $X^4$ satisfying the constraints, the proposed algorithm ends and we have $g(\theta')=\{a,b,c,d,e\},x(\theta')=(3,4,5,5,5)$ and $f(\theta')=\{(s,b),(a,b),(a,d),(c,d),(b,e)\}$.
\end{example}
\begin{figure}[htbp]
    \centering
    \includegraphics[width=9cm]{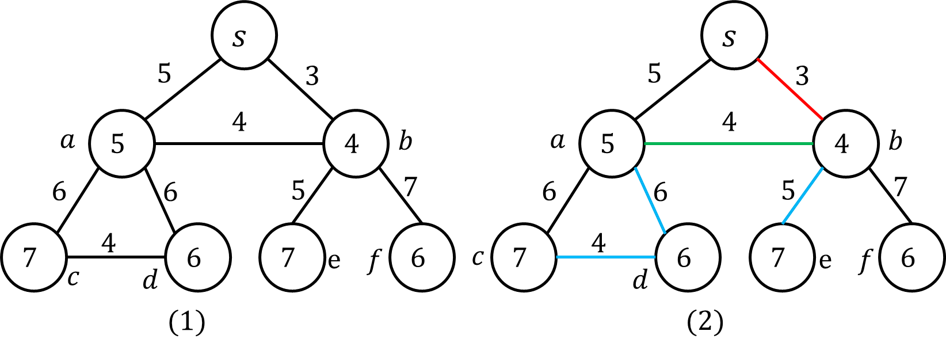}
     \caption{The left figure is $G(\theta')$. The red line in the right figure denotes the selected edge in the first stage, the green line denotes the selected edge in the second stage and the blue lines denote the selected edges in the third stage. \iffalse The $s$ represents the source and $a, b, c, d, e, f$ represent the nodes. The numbers on the edges represent the cost for the connectivity and the numbers in the circles represent the valuations.\fi}
	 \label{RSM_ex}
\end{figure}




\subsection{Properties of RSM}
We show the properties of RSM in this section.
\begin{theorem}
	\label{IC_2}
	The repeated selection mechanism satisfies truthfulness. 
\end{theorem}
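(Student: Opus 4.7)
The plan is to show that every deviation by a node $i$ (cutting adjacent edges or misreporting $v_i$) weakly decreases $i$'s utility in RSM. I will fix $\theta'_{-i}$ and compare the execution under the truthful type $\theta_i = (e_i, v_i)$ with the execution under an arbitrary report $\theta_i' = (e_i', v_i')$ stage by stage. Throughout, I will exploit the non-decreasing price sequence $X^0 \leq X^1 \leq \dots$ and the stage-feasibility constraint $v_i' \geq X^t$ for $i \in S^t$; in particular, whenever $i$ is selected at stage $t^*$ under truthful reporting, we have $u_i = v_i - X^{t^*} \geq 0$, which is the benchmark any deviation must beat.

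The main case split is by the type of deviation. For edge cutting or for $v_i' < v_i$, the stage-$t$ program becomes more restrictive on sets containing $i$ (cutting edges weakly increases $C(S)$, and a lower $v_i'$ tightens the feasibility constraint $v_i' \geq X$) and is unchanged on sets not containing $i$. I will show by induction on $t$ that the deviation's stage price $\widetilde{X}^t$ is weakly at least the truthful $X^t$, carrying along an auxiliary invariant that controls how the free-edge update at the end of each stage diverges between the two runs. Given this monotone comparison, if $i$ remains in the game under the deviation and is eventually selected at some stage $\widetilde{t}^*$, it pays $\widetilde{X}^{\widetilde{t}^*} \geq X^{t^*}$ and its utility weakly decreases; if instead $i$ is placed in some $\widetilde{W}^t$ (because $v_i' < \widetilde{X}^t$) or is never selected before termination, $u_i$ drops to $0 \leq v_i - X^{t^*}$.

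For the case $v_i' > v_i$, the only feasible pairs $(X,S)$ added to any stage-$t$ program by the deviation are pairs with $i \in S$ and $v_i < X \leq v_i'$. Hence at each stage either (a) the minimum is unaffected, in which case the execution coincides with truthful reporting and $i$'s outcome is unchanged, or (b) the minimum strictly improves via such a new pair with $X > v_i$, in which case $i \in S^t$ under the deviation and pays $\widetilde{X}^t > v_i$, yielding strictly negative utility $v_i - \widetilde{X}^t < 0$, which is strictly worse than the non-negative truthful outcome.

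The hard part will be the inductive argument of the second paragraph, specifically keeping the free-edge update under control: after stage $t$, edges among $M^t \cup \{s\}$ become free, but the selected sets $S^t$ may differ between the two executions, so the two versions of $C(\cdot)$ used in stage $t+1$ are a priori incomparable. The crux is to choose the right auxiliary invariant---for instance, one relating the ``discount'' afforded by the two free-edge sets to the nodes still in play---so that $\widetilde{X}^t \geq X^t$ continues to hold across the free-edge update; once this is in place, the remainder of the argument becomes a routine case analysis.
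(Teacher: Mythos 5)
Your framework (benchmark utility $v_i - X^{t^*}\geq 0$, case split by deviation type, monotone price sequence) is sound, and your treatment of valuation misreports essentially matches the paper's: underreporting only removes feasible $(X,S)$ pairs containing $i$, and overreporting only adds pairs with $v_i < X \leq v_i'$, so any change in $i$'s outcome either leaves its utility unchanged or makes it negative. The gap is exactly where you flag it: the stage-by-stage induction for edge cutting. You propose to prove $\widetilde{X}^t \geq X^t$ for all $t$ via an ``auxiliary invariant'' controlling the divergence of the free-edge updates, but you never exhibit that invariant, and this is the entire difficulty --- once the selected sets $S^t$ and $\widetilde{S}^t$ differ, the zeroed-out edge sets differ, and the two versions of $C(\cdot)$ at stage $t+1$ are not comparable by any obvious monotonicity. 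A proof that defers this to a to-be-chosen invariant is not a proof.

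The paper sidesteps the whole issue with a structural observation you are missing: if $i$ is selected at stage $t$ under truthful reporting, then $i$ does not lie on the minimum Steiner tree of $S^r$ for any earlier stage $r<t$ (otherwise it would already have been selected at stage $r$). Consequently $i$'s adjacent edges play no role in stages $1,\dots,t-1$, the two executions produce identical $M^{t-1}$, $N^{t-1}$ and identical free-edge sets entering stage $t$, and the comparison collapses to a single stage: cutting edges gives $C'(S)\geq C(S)$ for every candidate $S\ni i$, hence $\frac{C'(S)}{|S|}\geq\frac{C(S)}{|S|}$ and $x_i(\theta'')\geq x_i(\theta')$ if $i$ is still selected, or utility $0$ if it is not. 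You should replace your open-ended induction with this earlier-stages-are-unaffected argument (or actually construct the invariant you allude to, which the paper does not attempt and which appears substantially harder).
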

\begin{proof} 
Firstly, we prove that each node $i \in V$ will report its adjacent edges truthfully. Denote two report profile by $\theta'=((e_i',v_i'),\theta_{-i}')$ where $e_i'=e_i$ and $\theta''=((e_i'',v_i'),\theta_{-i}')$. When $i$ truthfully reports its adjacent edges, there are two cases.
\begin{enumerate}
    \item $i \notin g(\theta')$. Then we have $u_i(\theta')=0$. If $i$ reports $e_i'' \subset e_i$, for any $S(i\in S)$, $C(S)$ will increase since the set of available edges shrinks. Hence, $i \notin g(\theta'')$ and the utility does not change.  
    \item $i \in g(\theta')$. Then $u_i(\theta')=v_i-x_i(\theta')\geq 0$. Assume that $i\in S^t$. If $i$ reports $e_i''\subset e_i$, we first prove the set of selected nodes before stage $t$ does not change due to $i$'s misreporting, i.e., $M^{t-1}=\hat{M}^{t-1}$, where $\hat{M}^{t-1}$ is the set of selected nodes before stage $t$ given $\theta''$. If node $i$ belongs to the minimum Steiner tree of $S^r$ in the stage $r$ ($r<t$), then it should have been selected in stage $r$, which leads to a contradiction to $i\in S^t$. Therefore, we know node $i$ does not belong to the minimum Steiner tree of $S^r$ for any stage $r$ ($r<t$). Then the selected edges of $M^{t-1}$ and $\hat{M}^{t-1}$ are the same, i.e., $M^{t-1}=\hat{M}^{t-1}$.
 
    Then we prove the utility of node $i$ weakly decreases due to $i$'s misreporting. Based on the above analysis, there are two cases for node $i$.
	\begin{itemize}
		       \item Node $i\in g(\theta'')$. Let $C'(S)$ denote the minimum cost of any set $S$ under $\theta''$. Since $C'(S)\geq C(S)$, we have $\frac{C'(S)}{|S|}\geq \frac{C(S)}{|S|}$. Hence, $x_i(\theta'')\geq x_i(\theta')$ and $u_i(\theta'')\leq u_i(\theta')$.
                \item Node $i\notin g(\theta'')$. Then we have $u_i(\theta'')=0\leq u_i(\theta')$. 
\end{itemize}
\end{enumerate}

	Secondly, we prove that each node $i \in V$ will report its valuation truthfully. Denote two report profile by $\theta'=((e_i',v_i'),\theta_{-i}')$ where $v_i'=v_i$ and $\theta''=((e_i',v_i''),\theta_{-i}')$. When $i$ truthfully reports its valuation, there are two cases.
	\begin{enumerate}
	    \item $i \in S^t\subseteq g(\theta')$. If $i$ reports $v_i'' > v_i$, it is still selected in stage $t$ and $X^t =\hat{X}^t$, where $\hat{X}^t$ denotes the cost share in stage $t$ given $\theta''$. Hence, $u_i(\theta'')=v_i-\hat{X}^t = v_i-X^t = u_i(\theta')$. If $i$ reports $v_i'' < v_i$, there are two possibilities.
	    \begin{itemize}
	        \item $ X^t\leq v_i'' < v_i$. Then it is still selected and $u_i(\theta'')= u_i(\theta')$.
	        \item $ v_i'' < X^t$. Then it is not selected and $u_i(\theta'')=0\leq u_i(\theta')$.
	    \end{itemize}
	    
	    \item $i \notin g(\theta')$. If $i$ reports $v_i'' < v_i$, then it is still not selected and the utility does not change. If $i$ reports $v_i'' > v_i$, there are two possibilities.
	    \begin{itemize}
	        \item $i \notin g(\theta'')$. Then the utility does not change.
	        \item $i \in S^t\subseteq g(\theta'')$. Then $ x_i(\theta')=\frac{C(S^t)}{|S^t|}$. Since node $i$ cannot be selected given $\theta'$, we have $v_i<\frac{C(S^t)}{|S^t|}$. Hence, $ u_i(\theta'')=v_i-\frac{C(S^t)}{|S^t|}<0= u_i(\theta') $. 
	    \end{itemize}
	\end{enumerate}
	Therefore, $u_i(\theta')\geq u_i(\theta'')$, i.e., $i$'s utility is maximized when $i$ reports its valuation truthfully. 
	\end{proof}

\begin{theorem}
\label{BB_2}
The repeated selection mechanism satisfies budget balance.
\end{theorem}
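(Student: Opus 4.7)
My plan is to establish $\sum_{i\in V}x_i(\theta')=\sum_{e\in f(\theta')}c_{e}$ by a stage-by-stage matching in which both sides equal $\sum_{t=1}^{T} C(S^t)$, where $T$ denotes the index of the last successful stage (the one just before the loop breaks). The payment side is immediate from the mechanism: every $i\in S^t$ is assigned the common share $X^t$, and the equality constraint $X^t\cdot|S^t|=C(S^t)$ of the stage-$t$ program forces the total payment collected in that stage to equal $C(S^t)$. Unselected nodes pay zero and the $S^t$ are pairwise disjoint by construction, so summing over stages gives $\sum_{i\in V}x_i(\theta')=\sum_{t=1}^{T} C(S^t)$.

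For the edge side, I would argue $\sum_{e\in f(\theta')}c_{e}=\sum_{t=1}^{T} C(S^t)$ via the following observation. At the start of stage $t$ the mechanism has reset the cost of every edge with both endpoints in $M^{t-1}\cup\{s\}$ to zero, which is equivalent to contracting $M^{t-1}\cup\{s\}$ into a single super-source. A minimum Steiner tree for $S^t$ in this reduced graph can be chosen canonically so as to contain no edge lying entirely inside the contracted super-source, since any such edge would merely form a redundant cycle. Hence every $e\in\mathcal{E}^t$ has at least one endpoint outside $M^{t-1}\cup\{s\}$, which yields two consequences: (i) the reduced cost of $e$ equals its original cost $c_{e}$, so the total original weight of $\mathcal{E}^t$ equals the reduced-graph weight $C(S^t)$; and (ii) $\mathcal{E}^t$ is disjoint from every earlier $\mathcal{E}^r$ ($r<t$), because an edge of $\mathcal{E}^r$ has both endpoints in $M^r\cup\{s\}\subseteq M^{t-1}\cup\{s\}$. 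Summing (i) over $t$ and using disjointness (ii) to unfold $f(\theta')=\mathcal{E}=\bigcup_{t=1}^{T}\mathcal{E}^t$ as a disjoint union gives $\sum_{e\in f(\theta')}c_{e}=\sum_{t=1}^{T} C(S^t)$, which combined with the payment identity proves budget balance.

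The main obstacle will be making the canonical Steiner-tree choice airtight, in particular handling the degenerate case in which a minimum Steiner tree at some stage uses a node outside $S^t\cup M^{t-1}\cup\{s\}$ as an intermediate Steiner point; any such node and its induced edges need to be attributed to the first stage in which they appear so that no edge reappears in a later $\mathcal{E}^r$ and gets double-counted. Once one verifies that every minimum Steiner tree at stage $t$ can be pruned of edges lying inside the contracted super-source without increasing its weight, the two sides telescope to the same quantity and the rest is mechanical bookkeeping against the stage-$t$ equality constraint.
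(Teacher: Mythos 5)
Your decomposition is exactly the one the paper uses: both sides of the budget-balance identity are matched to $\sum_{t}C(S^t)$, with the payment side following from the stage-$t$ constraint $X^t\cdot|S^t|=C(S^t)$ and the disjointness of the $S^t$. The difference is that the paper asserts the edge-side equality $\sum_{t}C(S^t)=\sum_{(i,j)\in f(\theta')}c_{(i,j)}$ in a single line, whereas you actually argue it, via the contraction of $M^{t-1}\cup\{s\}$ into a super-source, the pruning of zero-cost internal edges, and the resulting disjointness of the $\mathcal{E}^t$; that justification is the real content of the theorem. The obstacle you flag at the end is genuine and is not addressed by the paper either: the mechanism only sets $c_{(i,j)}=0$ for $i,j\in M^t\cup\{s\}$, so a Steiner point $w\notin M^t$ used by the stage-$t$ tree (for instance a node whose valuation is too low ever to be selected) keeps its incident edges at full cost in later stages; if a later stage's tree reuses such an edge, its cost is charged a second time inside $C(S^{t'})$ while the edge appears only once in $f(\theta')$, and the identity fails in the direction of over-collection. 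Your suggested repair --- attributing such a node and its induced edges to the first stage in which they appear, i.e., zeroing them as well --- is a modification of the mechanism as written rather than a consequence of it, so to close the argument you must either adopt that modification explicitly or prove that an optimal stage-$t$ Steiner tree never needs a Steiner point outside $S^t\cup M^{t-1}\cup\{s\}$; neither step appears in the paper's proof.
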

\begin{proof}
Given $\theta' \in \Theta$, in stage $t$, the sum of all nodes' cost share in $S^t$ equals the total cost of connecting all the nodes in $S^t$, i.e., $X^t\cdot |S^t|=C(S^t)$. Then for all the stages, the sum of all selected nodes' cost share in $g(\theta')$ equals the total cost of connecting all nodes in $g(\theta')$, i.e., $\sum_{t}X^t\cdot |S^t|=\sum_{t}C(S^t)=\sum_{(i,j) \in f(\theta')}c_{(i,j)}$. Hence, the mechanism satisfies budget balance.  
\end{proof}

\begin{theorem}
\label{fea}
The repeated selection mechanism satisfies feasibility.
\end{theorem}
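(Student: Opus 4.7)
The plan is to verify feasibility by directly appealing to the constraints baked into the optimization problem solved at each stage of RSM. Since $x_i(\theta')$ is defined piecewise depending on whether $i$ is selected, and in what stage, a clean case analysis on the fate of node $i$ should suffice.

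First, I would split on whether $i \in g(\theta')$ or not. If $i \notin g(\theta')$, then by Definition~\ref{def2} we have $x_i(\theta')=0$, and since reported valuations satisfy $v_i' \geq 0$ by assumption, feasibility holds trivially for such nodes.

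Next, suppose $i \in g(\theta')$. Then there exists a unique stage $t$ such that $i \in S^t$, and by the assignment rule in RSM we have $x_i(\theta')=X^t$. Now look at the optimization problem solved in stage $t$: one of its constraints is precisely $v_j' \geq X^t$ for every $j \in S^t$. In particular, applying this to $j=i$ gives $v_i' \geq X^t = x_i(\theta')$. So feasibility holds for selected nodes as well.

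Combining the two cases yields $x_i(\theta') \leq v_i'$ for all $i \in V$ and all $\theta' \in \Theta$, which is exactly feasibility. There is no real obstacle here: the argument is essentially just reading off the constraint built into the stage-$t$ subproblem, so the proof should be only a few lines long. The only minor subtlety worth flagging is that $x_i(\theta')$ is well-defined because the stage $t$ in which a node is selected is unique (nodes that enter $S^t$ are removed from $N^t$ and are not considered again), so there is no ambiguity in the cost share assignment.
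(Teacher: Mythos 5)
Your proposal is correct and follows essentially the same argument as the paper: unselected nodes pay $0 \leq v_i'$, and selected nodes pay $X^t$, which the stage-$t$ constraint $v_i' \geq X^t$ bounds by the reported valuation. No differences worth noting.
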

\begin{proof}
Given a report profile $\theta' \in \Theta$, for each node $i \in V\setminus g(\theta')$, we have $x_i(\theta')=0\leq v_i'$. For each node $i \in g(\theta')$, by the proposed mechanism, $x_i(\theta')=X^t\leq v_i'$ for the stage $t$. So we have $x_i(\theta')\leq v_i'$. Therefore, the mechanism satisfies feasibility.  
\end{proof}

\begin{theorem}
\label{}
The repeated selection mechanism satisfies individual rationality.
\end{theorem}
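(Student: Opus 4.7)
The plan is to unwind the definition of individual rationality and read off the conclusion directly from the constraints of the optimization program that defines RSM. Fix any node $i \in V$, let $\theta_i = (e_i, v_i)$ be its truthful type, and let $\theta_{-i}' \in \Theta_{-i}$ be an arbitrary report profile of the others. Write $\theta' = (\theta_i, \theta_{-i}')$ for the resulting profile; note in particular $v_i' = v_i$.

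I would split the argument into the two natural cases. First, if $i \notin g(\theta')$, then by Definition~\ref{def2} we have $x_i(\theta') = 0$ and $u_i(\theta') = 0 \geq 0$, so IR holds trivially. Second, suppose $i \in g(\theta')$, and let $t$ be the stage at which $i$ is selected, so $i \in S^t$ and $x_i(\theta') = X^t$. The key observation is the third constraint of the minimization problem defining stage $t$, namely $v_j' \geq X^t$ for every $j \in S^t$. Applying this to $j = i$ and using $v_i' = v_i$, we obtain $v_i \geq X^t = x_i(\theta')$, and therefore $u_i(\theta') = v_i - x_i(\theta') \geq 0$.

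Since this argument is independent of $\theta_{-i}'$, and both cases are covered, IR holds for every $i \in V$ and every $\theta_{-i}' \in \Theta_{-i}$. There is essentially no hard step here: the mechanism is designed so that the feasibility-in-the-optimization constraint $v_i' \geq X^t$ is exactly the IR constraint once a truthful report is plugged in. The only conceptual point worth emphasizing in the write-up is that we rely on $v_i' = v_i$ coming from the truthful report of $i$, which is what Definition~\ref{defir} requires, while the other nodes' reports are arbitrary and do not affect the inequality $X^t \leq v_i$ for $i$'s own stage.
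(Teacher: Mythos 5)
Your proof is correct and follows essentially the same route as the paper's: split on whether $i$ is selected, and in the selected case read $x_i(\theta') = X^t \leq v_i' = v_i$ off the feasibility constraint of the stage-$t$ optimization. The only (minor, arguably cleaner) difference is that you obtain $v_i' = v_i$ directly from the hypothesis of Definition~\ref{defir} that $i$ reports truthfully, whereas the paper invokes its truthfulness theorem for the same equality.
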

\begin{proof}
Given a report profile $\theta' \in \Theta$, for each node $i \in V\setminus g(\theta')$, we have $u_i(\theta')=0$. For each node $i \in g(\theta')$, by the proposed mechanism, we have $x_i(\theta')\leq v_i'$. By Theorem~\ref{IC_2}, we have $v_i'=v_i$. Hence, we have $u_i(\theta')=v_i -x_i(\theta')\geq 0$. So the mechanism satisfies individual rationality. 
\end{proof}

\begin{theorem}
\label{}
The repeated selection mechanism satisfies positiveness.
\end{theorem}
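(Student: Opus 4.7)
The plan is to do a direct case split on whether a node $i$ is selected by the mechanism. For any node $i \in V \setminus g(\theta')$, the mechanism explicitly sets $x_i(\theta') = 0$, so positiveness holds trivially. The substantive case is $i \in g(\theta')$, where $i$ is picked up in some stage $t$ and assigned $x_i(\theta') = X^t$; I need to show $X^t \geq 0$.

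For the selected case, I would appeal directly to the equality constraint enforced in the optimization of stage $t$, namely $X^t \cdot |S^t| = C(S^t)$. Since $S^t \neq \emptyset$ whenever the stage succeeds, $|S^t| \geq 1$. And $C(S^t)$ is the weight of the minimum Steiner tree of $S^t \cup \{s\}$, which is a sum of edge costs $c_{(i,j)} \geq 0$ (note that in stages after the first, some of these edge costs may have been reset to $0$ by the mechanism, but none are ever made negative). Hence $C(S^t) \geq 0$, and dividing by $|S^t| > 0$ yields $X^t \geq 0$, so $x_i(\theta') \geq 0$.

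As a sanity check that is actually redundant here, one may observe the full monotone chain $0 = X^0 \leq X^1 \leq X^2 \leq \cdots$, which is guaranteed by the constraint $X^t \geq X^{t-1}$ built into the optimization at every stage. Thus, once $X^0 = 0$ is fixed as the base case, induction on the stage index gives $X^t \geq 0$ for every stage that yields a feasible solution.

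The main ``obstacle,'' if one exists, is simply being careful that the quantities $C(S^t)$ and $|S^t|$ are indeed non-negative and positive respectively at every stage, and that the construction never introduces negative edge weights — both follow immediately from the problem setup ($c_{(i,j)} \geq 0$ is part of the model) and from the stage successfully terminating with a nonempty $S^t$. With these remarks in place the proof is essentially one line per case.
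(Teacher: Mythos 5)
Your proof is correct and follows the same route as the paper's: a case split on whether $i$ is selected, with unselected nodes paying $0$ and selected nodes paying $X^t \geq 0$. The paper simply asserts the non-negativity of $X^t$ as obvious, whereas you justify it via $X^t\cdot|S^t| = C(S^t) \geq 0$ (and, redundantly, via the monotone chain $X^0 \leq X^1 \leq \cdots$), so your version is just a more explicit rendering of the same argument.
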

\begin{proof}
Given a report profile $\theta' \in \Theta$, for each node $i \in V\setminus g(\theta')$, we have $x_i(\theta')=0$. For each node $i \in g(\theta')$, without loss of generality, we assume that it is selected in stage $t$. Obviously, according to the proposed mechanism, its cost share $X^t(\theta')$ is non-negative. Therefore, the mechanism satisfies positiveness.
\end{proof}

\begin{theorem}
The repeated selection mechanism satisfies symmetry.  
\end{theorem}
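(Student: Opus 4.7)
The plan is to exploit the $\sigma$-symmetry of the input, where $\sigma$ is the involution on $V$ that swaps $i$ and $j$ and fixes every other node. Under the hypotheses of symmetry ($v_i'=v_j'$, matching neighborhoods $r_i(\theta')\setminus\{j\}=r_j(\theta')\setminus\{i\}$, and equal costs $c_{(i,k)}=c_{(j,k)}$ for every common neighbor $k$), both $G(\theta')$ and the reported valuation vector are $\sigma$-invariant. I will show that RSM then treats $i$ and $j$ identically, giving $x_i(\theta')=x_j(\theta')$; combined with $v_i=v_j$ (using $v_i=v_i'$ and $v_j=v_j'$ from truthfulness, Theorem~\ref{IC_2}), this yields $u_i(\theta')=u_j(\theta')$. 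The case $i,j\notin g(\theta')$ is immediate since both utilities are zero, so I focus on the case where at least one of them is selected.

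I would proceed by induction on the stage index $t$. The inductive claim is that the configuration $(N^{t-1},M^{t-1},X^{t-1},\text{current edge weights})$ entering stage $t$ is $\sigma$-invariant so long as neither $i$ nor $j$ has yet been placed into any $S^{t'}$ for $t'<t$. The stage-$t$ optimization defining $S^t$ is $\sigma$-invariant because the cost function $C(\cdot)$, the cardinality, the monotonicity constraint $X^t\geq X^{t-1}$, and the valuation threshold $v_k'\geq X^t$ all commute with $\sigma$. The equal-valuation hypothesis $v_i'=v_j'$ ensures that the rule $W^t=\{k:v_k'<X^t\}$ cannot separate $i$ from $j$: they are either both dropped (yielding $u_i=u_j=0$) or both retained, preserving the invariant into stage $t+1$.

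For the step in which one of $i,j$ first enters some $S^{t^*}$, I would invoke $\sigma$-symmetric selection. Any optimal $S^{t^*}$ containing $i$ but not $j$ has an equally optimal twin $\sigma(S^{t^*})$ containing $j$ but not $i$; by a tie-breaking convention consistent with $\sigma$, or alternatively by the sharper structural observation that $S^{t^*}\cup\{j\}$ achieves the same objective value, the mechanism can be taken to choose a $\sigma$-invariant $S^{t^*}$. Thus both $i$ and $j$ enter $S^{t^*}$ simultaneously with the common cost share $X^{t^*}$, so $x_i(\theta')=x_j(\theta')$ and the equality of utilities follows.

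The main obstacle is formally justifying the $\sigma$-symmetric selection. RSM's optimization can in principle admit optima containing only one of $\{i,j\}$, so either a tie-breaking rule respecting the $\sigma$-symmetry of the input must be assumed, or one must establish the structural bound $C(S^{t^*}\cup\{j\})/(|S^{t^*}|+1)\leq C(S^{t^*})/|S^{t^*}|$ via a Steiner-tree extension argument. The latter requires more than the naive $C(S^{t^*}\cup\{j\})\leq C(S^{t^*})+C(\{j\})$; it uses the fact that the minimum Steiner tree of $S^{t^*}\cup\{s\}$ already contains $i$ and hence infrastructure that $j$ can exploit at low marginal cost, by appealing to the symmetric role that $\sigma(i)=j$ plays in $\sigma$ applied to that tree.
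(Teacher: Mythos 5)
Your proposal follows essentially the same route as the paper's proof: the paper's entire argument is the one\-/line assertion that ``nodes $i$ and $j$ are either both selected in the same stage or they are not selected,'' followed by the observation that a common stage forces a common cost share $X^t$. Your $\sigma$-invariance induction over stages is a correct formalization of that assertion up to the point you yourself flag, and your treatment of the rejection rule $W^t$ and of the case where neither node is selected matches the paper.

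The ``main obstacle'' you identify is genuine, and it is precisely the step the paper silently assumes away: the stage-$t$ program $\min_{S^t\subseteq N^{t-1}} C(S^t)/|S^t|$ can have several minimizers, and by $\sigma$-symmetry any minimizer $S$ with $i\in S$, $j\notin S$ is automatically tied with $\sigma(S)$, so without a tie-breaking convention RSM as written is not even a well-defined single-valued mechanism on symmetric instances. Your fix (a) (a $\sigma$-consistent tie-break) repairs the statement but requires amending the mechanism's specification. Your fix (b), the bound $C(S\cup\{j\})/(|S|+1)\leq C(S)/|S|$, is the more satisfying route but is not established by your sketch: it reduces to showing that $j$'s marginal attachment cost to the minimum Steiner tree $T$ of $S\cup\{s\}$ is at most the average $X^t=C(S)/|S|$. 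When $i$ is a leaf of $T$ attached by edge $(k,i)$ and $|S|\geq 2$, this does follow — optimality of $S$ against $S\setminus\{i\}$ gives $c_{(k,i)}\leq X^t$, and $j$ attaches via $(k,j)$ at the same cost (or is already a Steiner point of $T$, in which case it attaches for free). But the singleton case $S=\{i\}$ and the case where $i$ is an interior node of $T$ (so that deleting the terminal $i$ need not reduce $C$) require separate arguments that your proposal does not supply, and even then one must still handle the scenario in which $S\cup\{j\}$ is strictly worse and $j$ is deferred to a later stage, where you would need $j$'s later cost share to equal $X^t$ rather than exceed it. So the gap you name is real, it is shared by the paper's own proof, and neither of your two proposed repairs is carried out far enough to close it.
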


\begin{proof}
We need to show that, given $\theta' \in \Theta$ and $i,j \in V$ with $r_i(\theta') \setminus \{j\} = r_j(\theta') \setminus \{i\}$ and $v_i=v_j$, $c_{(i,k)} = c_{(j,k)}$ ($\forall k \in r_i(\theta') \setminus \{j\}$) implies $u_i(\theta')=u_j(\theta')$. 

By the proposed mechanism, nodes $i$ and $j$ are either both selected in the same stage or they are not selected. If they are not selected, we have $u_i(\theta')=u_j(\theta')=0$. Without loss of generality, if they are both selected in stage $t$, by the proposed mechanism, we have $x_i(\theta')=x_j(\theta')=X^t$. Since $v_i=v_j$, we have $u_i(\theta')=u_j(\theta')$. So the mechanism satisfies symmetry. 
\end{proof}
\begin{theorem}
\label{ranking}
The repeated selection mechanism satisfies ranking.  
\end{theorem}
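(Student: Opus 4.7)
The plan is to reduce the ranking inequality to a claim about the order in which $i$ and $j$ are admitted by RSM, and then establish that order with a Steiner-tree swap argument. The analysis splits on whether $i,j$ lie in $g(\theta')$. If neither is selected, $u_i(\theta')=u_j(\theta')=0$. If $i\in g(\theta')$ but $j\notin g(\theta')$, the individual rationality of RSM gives $u_i(\theta')\geq 0 = u_j(\theta')$. The remaining case is $j\in g(\theta')$, where the goal is to show $i\in g(\theta')$ with $x_i(\theta')\leq x_j(\theta')$; combined with $v_i'=v_j'$, this yields $u_i(\theta')\geq u_j(\theta')$.

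For that case, let $t_j$ be the stage at which $j$ is selected. Because the in-stage constraint $X^t\geq X^{t-1}$ forces cost shares to be non-decreasing across stages, it suffices to show that $i$ is selected in some stage $t_i\leq t_j$. Suppose, for contradiction, that $i\notin M^{t_j}$. Since $j$ was not eliminated through stage $t_j$, we have $v_j'\geq X^{t_j-1}$, and because $v_i'=v_j'$, $i$ is not eliminated either; hence $i\in N^{t_j-1}$. Consider the swap $S':=(S^{t_j}\setminus\{j\})\cup\{i\}$. Starting from a minimum Steiner tree $T$ of $S^{t_j}\cup\{s\}$, I would construct a tree $T'$ spanning $S'\cup\{s\}$ by deleting $j$ together with its $T$-incident edges and, for each $T$-neighbor $k$ of $j$ with $k\neq i$, inserting the edge $(i,k)$, which exists because $r_i(\theta')\setminus\{j\}=r_j(\theta')\setminus\{i\}$; if $i$ is already a Steiner vertex of $T$, the single substitute edge that would close a cycle with the component already containing $i$ is omitted. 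The hypothesis $c_{(i,k)}\leq c_{(j,k)}$ then yields $w(T')\leq w(T)$, hence $C(S')\leq w(T')\leq w(T)=C(S^{t_j})$. Because $|S'|=|S^{t_j}|$ and $v_i'\geq X^{t_j}$, the set $S'$ is feasible for the stage-$t_j$ minimization and attains the minimum value $X^{t_j}$, so a minimizer containing $i$ is available. With an appropriate tie-breaking rule, RSM selects a set containing $i$ at stage $t_j$, contradicting $i\notin M^{t_j}$.

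The main obstacle is the interplay between the Steiner-tree swap and the tie-breaking at stage $t_j$. When $i$ does not appear in $T$ the substitution is routine, but when $i$ is already a Steiner vertex of $T$ one must carefully track the degree of $j$: removing $j$ leaves $i$ inside one of the resulting components, exactly one of the substitute $(i,k)$ edges would introduce a cycle and must be dropped, and the resulting edge count must be verified to ensure $T'$ is indeed a tree rather than a forest. The secondary issue is that both $S^{t_j}$ (containing $j$ but not $i$) and $S'$ (containing $i$ but not $j$) are valid minimizers, so RSM must be equipped with a tie-breaking rule that prefers a minimizer containing $i$ whenever one is available; a natural choice is to select a minimizer of maximum cardinality and, among those, one that contains the nodes with cheaper adjacent edges. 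Once this rule is fixed, the contradiction closes and the ranking inequality $u_i(\theta')\geq u_j(\theta')$ follows.
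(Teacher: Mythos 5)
Your proof follows essentially the same route as the paper's: the same three-case split, the same key comparison $C(S\cup\{i\})\le C(S\cup\{j\})$ (which you establish via the explicit Steiner-tree edge swap that the paper only asserts), and the monotonicity of the stage payments $X^t$ across stages to conclude $x_i(\theta')\le x_j(\theta')$. The one substantive point you add is the observation that $S^{t_j}$ and its swap $S'$ may tie in the stage-$t_j$ minimization, so a tie-breaking rule favoring $i$ is needed to force $t_i\le t_j$ --- a genuine gap that the paper's own proof silently skips over, and which your version patches at the cost of slightly augmenting the mechanism's specification.
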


\begin{proof}
We need to show that, given $\theta' \in \Theta$ and $i,j \in V$ with $r_i(\theta') \setminus \{j\} = r_j(\theta') \setminus \{i\} $ and $v_i=v_j$, $c_{(i,k)} \leq c_{(j,k)}$ ($\forall k \in r_i(\theta') \setminus \{j\}$) implies $u_i(\theta')\geq u_j(\theta')$. For nodes $i$ and $j$, there are three cases.
\begin{itemize}
    \item $i,j\notin g(\theta')$. Obviously, we have $u_i(\theta')=u_j(\theta')=0$.
    \item $i \in g(\theta'), j\notin g(\theta')$. By individual rationality, we have $u_i(\theta')\geq 0=u_j(\theta')$.
    \item $i,j\in g(\theta')$. Let $t_i$ and $t_j$ denote the stages where nodes $i$ and $j$ are selected respectively. For any set $S$ with $i,j\notin S$, we have $C(S\cup\{i\}) \leq C(S\cup\{j\})$. So we have $t_i\leq t_j$. Since the later the selected stage is, the higher the cost share will be, we have $X^{t_i}\leq X^{t_j}$. Since $v_i=v_j$, we have $u_i(\theta')-u_j(\theta')=x_j(\theta')-x_i(\theta') = X^{t_j}-X^{t_i}\geq 0$.
\end{itemize}
Therefore, the mechanism satisfies ranking.
\end{proof}


    

\begin{theorem}
The repeated selection mechanism satisfies utility monotonicity.  
\end{theorem}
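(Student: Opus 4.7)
The plan is case analysis, reducing the nontrivial part to a cost-share monotonicity lemma. Fix $i \in V$ and let $(i,j)$ be the adjacent edge whose cost increases; write $u_i^+(\theta')$ for $i$'s utility after the change, and let $C^+(\cdot)$ and $X^{k,+}$ denote the minimum Steiner tree cost and the stage-$k$ cost share in the modified instance. Note throughout that $C^+(S) \geq C(S)$ for every $S \subseteq V$, since any Steiner tree valid under the larger cost function was valid before at no higher cost.

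\textbf{Case 1:} $i \notin g(\theta')$. Then $u_i(\theta') = 0$. Because $C^+ \geq C$, the feasibility region of each stage's optimization weakly shrinks, so the stage cost shares can only move upward; any node that could not afford $X^k$ originally still cannot afford $X^{k,+}$. Hence $i$ remains unselected (or is only includable at a stage whose cost share exceeds $v_i$), giving $u_i^+(\theta') \leq 0 = u_i(\theta')$.

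\textbf{Case 2:} $i \in g(\theta')$ with $i$ selected at stage $t$ at cost share $X^t$. Theorem~\ref{fea} together with Theorem~\ref{IC_2} gives $u_i(\theta') = v_i - X^t \geq 0$. If $i$ is not selected after the cost change, then $u_i^+(\theta') = 0 \leq u_i(\theta')$. Otherwise $i$ is selected at some stage $t^+$ with share $X^{t^+,+}$, and it suffices to prove $X^{t^+,+} \geq X^t$.

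The hard part, and the main obstacle, is the underlying monotonicity lemma: increasing a single edge cost weakly increases the cost share of every node selected in both runs. The base case (stage $1$) is clean because $C^+(S) \geq C(S)$ shrinks the stage-$1$ feasible region, forcing $X^{1,+} \geq X^1$. The inductive step is delicate because the set $M^{t-1}$ of already-selected nodes may differ from its modified counterpart $M^{t^+-1,+}$, which in turn changes the ``free edge'' structure used in later stages. To handle this I plan to show that the modified run is always \emph{delayed} relative to the original: any node selected by stage $k$ in the modified run was already selected by some stage $\leq k$ in the original. Combined with the built-in monotonicity $X^1 \leq X^2 \leq \cdots$, this yields $X^{t^+,+} \geq X^t$ and finishes the proof.
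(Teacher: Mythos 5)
Your case decomposition matches the paper's, and the easy subcases are fine, but the one place where all the difficulty lives---the ``delayed run'' lemma---is both left unproven and, as stated, false. Increasing $c_{(i,j)}$ raises $C(S)$ for \emph{every} set $S$ whose minimum Steiner tree routes through $(i,j)$, including sets that do not contain $i$, so the change can reshuffle which set attains the stage-1 minimum rather than merely delaying things. Concretely, take $V=\{i,b,a\}$ with $c_{(s,i)}=1$, $c_{(i,b)}=1.1$, $c_{(s,a)}=3$, and all valuations large. The original run gives $S^1=\{i\}$ with $X^1=1$, $S^2=\{b\}$ with $X^2=1.1$, $S^3=\{a\}$ with $X^3=3$. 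Raising $c_{(s,i)}$ to $10$ gives $S^{1,+}=\{a\}$ with $X^{1,+}=3$ and $S^{2,+}=\{i,b\}$ with $X^{2,+}=5.55$: node $a$ jumps from stage $3$ to stage $1$, so $M^{1,+}\not\subseteq M^{1}$ and your lemma fails. (The theorem's conclusion survives here, since $i$'s share rises from $1$ to $5.55$, but your route to it collapses: without $M^{k,+}\subseteq M^{k}$ you cannot conclude $t\le t^{+}$, and the chain $X^{t^{+},+}\ge X^{t^{+}}\ge X^{t}$ needs the per-stage inequality $X^{k,+}\ge X^{k}$, whose inductive step is precisely what is in doubt.) A second, related problem: once the two runs select different nodes early, they zero out \emph{different} edge sets, so the stage-$k$ cost functions for $k\ge 2$ are no longer pointwise ordered---a reshuffled modified run can make some later sets strictly cheaper than in the original run. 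This also undercuts the ``feasible region weakly shrinks at each stage'' assertion you use in Case~1, which is only immediate for stage~1.

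For comparison, the paper does not attempt a cross-instance, stage-by-stage induction at all: it disposes of the surviving subcase by invoking Theorem~\ref{ranking}, in effect comparing $i$ against a hypothetical twin that has the more expensive edge but lives in the \emph{same} instance, so that only one run of the mechanism is ever analyzed. That argument is itself quite terse, but it deliberately sidesteps the comparison of two divergent runs that your plan requires and that, as the example shows, does not behave as monotonically as you assume. To rescue your approach you would need a substantially weaker invariant than $M^{k,+}\subseteq M^{k}$---for instance, one tracking only the trajectory of $i$ itself and the minimum over sets containing $i$---and you would have to prove it in the presence of diverging free-edge structures.
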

\begin{proof}
Given $\theta' \in \Theta$ and nodes $i,j \in V$ such that the edge $(i,j) \in E$, there are two cases for node $i$.
\begin{itemize}
    \item $i\notin g(\theta')$. Then $u_i(\theta')=0$. When $c_{(i,j)}$ increases, $i$ cannot be selected and the utility remains unchanged.
    \item $i \in g(\theta')$. Then $u_i(\theta')=v_i-x_i(\theta')$. When $c_{(i,j)}$ increases, let $g^+(\theta')$ denote the set of selected nodes.
    \begin{itemize}
        \item $i \notin g^+(\theta')$. Then its utility is $0$. So the utility weakly decreases.
        \item $i \in g^+(\theta')$. According to Theorem~\ref{ranking}, it is easy to show the utility of $i$ weakly decreases.
    \end{itemize}
\end{itemize}
Hence, when $c_{(i,j)}$ increases, the utility of $i$ weakly decreases, i.e., the mechanism satisfies utility monotonicity.
\end{proof}
		  



\section{Conclusions}
\label{con}
In this paper, we study the cost sharing problem under private valuation and connection control on general graphs. We consider two important strategic behaviors of a node (i.e., cutting its adjacent edges and misreporting its valuation). We show that it is impossible for a mechanism to satisfy truthfulness, feasibility, efficiency, and budget balance simultaneously. We also prove that there exists no approximate ratio for efficiency and budget balance. Then we propose two truthful and feasible cost sharing mechanisms that satisfy efficiency or budget balance.

In the future, we try to characterize all possible cost sharing mechanisms that incentivize nodes to share their connections and reveal their valuations. 



\begin{acks}
This work is supported by Science and Technology Commission of Shanghai Municipality (No. 23010503000 and No. 22ZR1442200), and Shanghai Frontiers Science Center of Human-centered Artificial Intelligence (ShangHAI).
\end{acks}



\bibliographystyle{ACM-Reference-Format} 
\bibliography{sample}


\end{document}